\documentclass[
final
]{dmtcs-episciences}


\usepackage[utf8]{inputenc}
\usepackage{subfigure}

%
\usepackage{amsmath,amssymb,amsfonts,amsthm}
\usepackage{color}
\usepackage{enumerate}
\usepackage{algorithm}
\usepackage{algpseudocode}
\usepackage{textcomp}
\usepackage{fontenc} 
\usepackage{tikz}
\usepackage{tkz-berge}
\usepackage{tkz-graph}
\usepackage{mathtools}
\usetikzlibrary{graphs}
\usetikzlibrary{shapes,arrows}
\usetikzlibrary{graphs.standard}
\usetikzlibrary{arrows.meta}

\newcommand{\tw}{w}	

\newtheorem{theorem}{Theorem}[section]
\newtheorem{lemma}[theorem]{Lemma}

\theoremstyle{definition}
\newtheorem{definition}[theorem]{Definition}

\theoremstyle{remark}

\usepackage[round]{natbib}

\author{Meysam Rajaati Bavil Olyaei\affiliationmark{1,2}
	\and Mohammad Reza Hooshmandasl\affiliationmark{1,2}\\
	\and Michael J. Dinneen\affiliationmark{3}
	\and Ali Shakiba\affiliationmark{4}}
\title{On fixed-parameter tractability of the mixed domination problem for graphs with bounded tree-width}
\affiliation{
	Department of Computer Science, Yazd University, Yazd, Iran\\
	The Laboratory of Quantum Information Processing, Yazd University, Yazd, Iran\\
	Department of Computer Science, The University of Auckland, Auckland, New Zealand\\
	Department of Computer Science, Vali-e-Asr University of Rafsanjan, Rafsanjan, Iran}
\keywords{Mixed Domination, Tree decomposition, Tree-width, Fixed-parameter tractable}
\received{2016-12-28}
\revised{2017-5-23}
\accepted{2018-6-28}
\begin{document}
	\publicationdetails{20}{2018}{2}{2}{2615}
	\maketitle
	\begin{abstract}
		A mixed dominating set for a graph $G = (V,E)$ is a set $S\subseteq V \cup E$  such that every element $x \in (V \cup E) \backslash S$ is either adjacent or incident to an element of $S$.
		The mixed domination number of a graph $G$, denoted by $\gamma_m(G)$, is the minimum cardinality of mixed dominating sets of $G$. Any mixed dominating set with the cardinality of $\gamma_m(G)$ is called a minimum mixed dominating set.
		The mixed domination set (MDS) problem is to find a minimum mixed dominating set for a graph $G$ and is known to be an  NP-complete problem. In this paper, we present a novel approach to find all of the mixed dominating sets, called the AMDS problem, of a graph with bounded tree-width $\tw$.  Our new technique of assigning power values to edges and vertices, and combining with dynamic programming, leads to a  fixed-parameter algorithm of time $O(3^{\tw^{2}}\times \tw^2 \times |V|)$. 
		This shows that MDS is fixed-parameter tractable with respect to tree-width.
		In addition, we theoretically improve the proposed algorithm to solve the MDS problem in $O(6^{\tw} \times |V|)$ time.
	\end{abstract}
	
	\section{Introduction}
		The mixed dominating set (MDS) problem was first introduced in 1977 by Alavi et. al.~\cite{alavi1977total}. The MDS problem has many practical applications such as placing phase measurement units in an electric power system~\cite{zhao2011algorithmic}.
		Also, there are variations and generalizations of the MDS such as Roman MDS and signed Roman MDS which 
		were introduced and studied by  Abdollahzadeh et al.~\cite{ahangar2015mixed, ahangar2015signed}.
		
		An edge dominates its endpoints as well as all of its adjacent edges. Also, a vertex dominates all of its neighboring vertices as well as all of its incident edges. Formally, a set $S \subseteq V \cup E$ of vertices and edges of a graph $G=(V, E)$ is called a MDS if every element $x \in (V \cup E) \setminus S$ is dominated by an element of $S$. The mixed domination number of $G$ is the size of the smallest mixed dominating set of $G$ and is denoted by $\gamma_m(G)$. Finding all of the mixed dominating set of a graph is called AMDS problem.
		
		The MDS problem is NP-complete for general graphs~\cite{zhao2011algorithmic}.
		There exist different approaches to solve an NP-complete problem such as approximation, randomization,  heuristics, and parameterization.
		Several approximation algorithms exist for solving the MDS problem such as a 2-factor one by Hatami~\cite{hatami2007approximation}. 
		It is notable that the MDS problem remains NP-complete even for split graphs due to 
		the high tree-width of the input graph~\cite{lan2013mixed};
		however, the MDS problem is polynomial tractable for cacti and trees~\cite{lan2013mixed}.  
		A parallel concept is proposed by Adhar et al.~in~\cite{adhar1994mixed} which  requires $O(n)$ processors in CRCW PRAM model to solve the MDS in $O(\log n)$ time where $ n $ is the number of graph vertices.
		
		The parameterization method is a well-known technique which considers certain parameters on the input
		constant to get a polynomial time algorithm with respect to the size of the input and may contain exponential terms
		with respect to these fixed parameters.  A famous example of such parameters is the tree-width which was
		introduced by Robertson and Seymour in 1984~\cite{robertson1984graph}. The tree-width parameter has proven to
		be a good coping strategy for tackling the intrinsic difficulty for various NP-hard problems on graphs.  
		The tree-width measures the similarity of a graph to a tree. Since most of the algorithms work efficiently on trees, the tree decomposition of a graph can be used to speed up solving some problems on graphs with a small tree-width. 
		Although some problems in graph theory cannot be solved in polynomial time even with respect to some fixed parameter,
		there are many other interesting problems in graph theory which are fixed parameter tractable (FPT).
		To show that a problem is FPT, one existing way is to express the problem in monadic second-order logic; if a
		problem can be modeled in this way, then it is FPT by Courcelle's famous theorem~\cite{courcelle1992monadic,courcelle2015fly}. The reduction technique, which is an extension of a graph
		reduction to another graph of bounded tree-width by Bodlaender (see~\cite{bodlaender2001reduction}), is another
		technique which helps solving problems in linear time with respect to constant tree-width.
		
		Almost all of the algorithmic approaches that consider the input graph of a constant tree-width use the
		dynamic programming paradigm. For example, Chimani proposed an algorithm to solve the Steiner tree problem 
		using dynamic programming~\cite{chimani2012improved}. 
		
		In~\cite{rajaati2016fixed}, we proposed an approach to solve the problem of finding all of the mixed
		dominating sets (AMDS) for a graph $G = (V,E)$  of bounded tree-width $\tw$ which has time complexity 
		$O(3^{\tw^{2}}\times \tw^2 \times |V|)$.
		Our constructive algorithm shows that the MDS is fixed-parameter tractable with respect to tree-width. 
		As defined later, the fundamental idea we use to solve the MDS problem is to assign power values to vertices. 
		Recently, Jain et al.~in~\cite{jain2017m} enhanced the complexity\footnote{The ``big Oh star'' notation
		$O^*(f(\tw))$ indicates the algorithm runs in time $O(f(\tw) n^c)$, where $n$ is the input size, $c$ 
		is a constant independent to the treewidth $\tw$ and $f()$ is an arbitrary function dependent only on $\tw$.}
		of~\cite{rajaati2016fixed} to $O^*(6^{\tw})$.
		Here they showed how to turn any set $S \subseteq V\cup E$ to satisfy $(i)$ the edges in $S$ form a matching, 
		and $(ii)$ the set of endpoints of edges in $S$ is disjoint from the vertices in $S$, to a minimum sized mixed dominating set. 
		In this paper, we also modify our original proposed algorithm of~\cite{rajaati2016fixed} to solve MDS 
		with time complexity $O^*(6^{\tw})$.
		
		The rest of the paper is organized as follows: In Section~\ref{Sec2}, we give necessary notations and
		definitions. In Section~\ref{Sec3}, we define the concept of charging vertices which is a key part of our
		proposed algorithm. Our proposed algorithm that solves the AMDS is presented in Section~\ref{Sec4}. Then, we
		modify this algorithm to solve MDS. 
		In Section~\ref{Sec5}, we formally show the correctness of the proposed algorithm.  
		Finally, a brief conclusion and ideas for future work are discussed in Section~\ref{Sec7}.
		
	\section{Preliminaries}\label{Sec2}
		In this section, we overview the graph theory that is used throughout the paper. In general, the notation
		used below follows~\cite{west2001introduction} and~\cite{haynes1998fundamentals}.
		
		All graphs considered in this paper are undirected and simple, i.e.~no parallel edges or self-loops. 
		Let $G = (V, E)$ be a graph with the vertex set $V$ and the edge set $E$.
		
		For vertex $v\in V$, $N(v)$ denotes the open neighborhood of $v$ and is defined as $N(v) = \{u \in V \mid uv\in E\}$. The edge open  neighborhood of the vertex $v$ is defined as $N^{e}(v) = \{e \in E \mid e=uv\}$. 
		Also, for an edge $e=uv \in E$, $N(e) = \{u,v\}$ denotes the open neighborhood of $e$. 
		The edge open neighborhood of the edge $e=uv$ is defined as $N^{e}(e) = \{e^{\prime} \in E \mid 
		e^{\prime}=uv^{\prime}$ or $e'=u'v$ where $u\not=u'$ and $v\not=v'\}$. 
		We denote the mixed neighborhood of vertex $v$  by $N^{md}(v)$ such that $N^{md}(v) = N(v) \cup N^{e}(v)$, 
		and the mixed neighborhood of edge $e$  by $N^{md}(e)$ such that $N^{md}(e) = N(e) \cup N^{e}(e)$. 
		Finally, for any element $r\in V \cup E$, we denote the mixed neighborhood of $r$ by $N^{md}_{G}(r)$. 
		Also, for any element $r\in V \cup E$, the closed mixed neighborhood is defined as $N^{md}_G(r)\cup \{r\}$ 
		and is denoted by $N^{md}_G[r]$. 
		
		A tree decomposition of a graph $G$ is a mapping of $G$ into a tree $T$ which satisfies certain properties.
		Note that throughout the paper, nodes of $G$ are called vertices while nodes of $T$ are called bags.
		\begin{definition}
			Let $G=(V,E)$ be a graph. A tree decomposition of $G$ is a pair $(\mathcal{X}=\{X_i \mid i\in \mathcal{I}\},T)$, where each $X_i$ is a subset of $V$, which is called a bag, $T$ is a tree with elements of $\mathcal{I}$ as bags and satisfies the following three properties.
			\begin{enumerate}
				\item $\bigcup_{i\in \mathcal{I}}X_i=V$,
				\item for every edge $\{u,v\}\in E$, there is an index $i\in \mathcal{I}$ such that $\{u,v\}\subseteq X_i$,
				\item for all $i,j,k \in \mathcal{I}$, if $j$ lies on the path between $i$ and $k$ in $T$, then $X_i \cap X_k \subseteq X_j$.
			\end{enumerate}
		\end{definition}
		The width of a tree decomposition  $(\mathcal{X}=\{X_i \mid i\in \mathcal{I}\},T)$ equals to $\max\{|X_i|
		\mid i\in \mathcal{I}\}-1$. The tree-width of a graph $G$, denoted by $\tw$,  is the minimum width among all the tree decompositions of graph $G$.
		\begin{definition}
			A tree decomposition  $(\mathcal{X}=\{X_i \mid i\in \mathcal{I}\},T)$  is called a nice tree decomposition if the following conditions are met:
			\begin{enumerate}
				\item Every bag of the tree has at most two children.
				\item If a bag $i$ has two children $j$ and $k$, then $X_i=X_j=X_k$. Such a bag is called a \textsc{join} bag.
				\item If a bag $i$ has exactly one child like $j$, then one of the following conditions must hold:
				\begin{description}
					\item[(a)] $|X_i|=|X_j|+1$ and $X_j\subset X_i$
					\item[(b)] $|X_i|=|X_j|-1$ and $X_i\subset X_j$		
				\end{description}
				Note that if (a) holds, the bag $|X_i|$ is called an \textsc{introduce} bag, and if (b) holds, it is called a \textsc{forget} bag.
			\end{enumerate}
		\end{definition}
		\begin{lemma}\label{lemordertree-width}(\cite{bodlaender1996linear})
			Given a tree decomposition of a graph $G$ of width $\tw$,  and $n$ vertices, one can find a nice tree decomposition of $G$ in linear time of width $\tw$ and $O(n)$ bags. 
		\end{lemma}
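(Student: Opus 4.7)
The plan is to start from the given tree decomposition $(\mathcal{X},T)$ of width $\tw$ and transform it in several local steps into a nice one, while keeping the width at $\tw$ and bounding the number of bags linearly in $n$.

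First I would preprocess $T$ so that it has at most $n$ bags. Whenever an edge of $T$ joins bags $X_i$ and $X_j$ with $X_i\subseteq X_j$, contract it; this preserves the three tree-decomposition axioms and the width, and after exhaustive contraction every parent--child pair has incomparable bags, which forces the number of bags to be at most $n$ (a standard counting argument on when new vertices must appear). Next, I would root $T$ at an arbitrary bag $r$, orienting all parent--child relations.

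The core transformation then proceeds in three local rewrites carried out in a single bottom-up pass. \textbf{(i) Bounded branching / join bags:} a bag $i$ with children $j_1,\dots,j_d$ for $d\geq 2$ is replaced by a binary tree of $d-1$ duplicate copies of $X_i$, so the resulting branching bag always has exactly two children whose bag content equals its own. For $d\geq 3$ this costs $O(d)$ extra bags and preserves width since all inserted bags equal $X_i$. \textbf{(ii) Smoothing parent--child differences:} for every remaining parent--child edge $(i,j)$, I insert a chain that first forgets the vertices of $X_j\setminus X_i$ one by one and then introduces the vertices of $X_i\setminus X_j$ one by one; the intermediate bags are subsets of $X_i\cup X_j$ of size at most $\tw+1$, so the width does not grow, and the connectivity axiom is maintained because each forgotten/introduced vertex lies in the relevant bag set. \textbf{(iii) Leaves:} for every leaf bag $X_\ell$ of the (now modified) tree I attach a chain of introduce bags building $X_\ell$ up from the empty bag, so each leaf finally contains a single vertex as required.

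The main obstacle is the linear bag count. The width and the two nice-tree-decomposition structural properties are immediate from the construction, but controlling the number of bags requires the preprocessing step above together with a telescoping argument on symmetric differences. After the contraction phase there are at most $n$ original bags, so the branching-bag transformation contributes at most $O(n)$ duplicates; for the smoothing step, summing $|X_i\triangle X_j|$ over parent--child edges $(i,j)$ is $O(n+\tw)$ because each vertex of $G$ is introduced and forgotten at most a constant number of times along any root-to-leaf path (a consequence of the connectivity axiom applied to the rooted tree). Combined with the $O(\tw\cdot n)$ cost of the leaf-completion step, the total bag count is $O(\tw\cdot n)$, which is $O(n)$ for fixed $\tw$, and the whole procedure runs in linear time as each rewrite is local and performed once per edge or bag. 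Applying Bodlaender's linear-time construction of the original tree decomposition from \cite{bodlaender1996linear} as a black box yields the claimed linear-time bound.
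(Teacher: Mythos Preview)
The paper does not prove this lemma at all; it is simply quoted from \cite{bodlaender1996linear} as a known black box. So there is no ``paper's own proof'' to compare against, and your sketch is the standard textbook construction (essentially the argument found in Kloks' treewidth monograph and in later parameterized-algorithms texts), which is perfectly appropriate here.

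Your outline is correct, but one step is argued loosely. You claim that $\sum_{(i,j)}|X_i\triangle X_j|$ over all parent--child edges is $O(n+\tw)$ ``because each vertex of $G$ is introduced and forgotten at most a constant number of times along any root-to-leaf path.'' That justification bounds the contribution along a \emph{single} root--leaf path, not the global sum over all tree edges; a vertex whose bag-subtree has many boundary edges could contribute many times overall. The clean fix is simply to observe that after preprocessing there are $O(n)$ tree edges and each symmetric difference has size at most $2(\tw+1)$, giving $O(n\tw)$ inserted bags, which is $O(n)$ for fixed $\tw$. You already concede the $O(\tw\cdot n)$ total a sentence later, so the conclusion stands; just drop the stronger $O(n+\tw)$ claim or replace its justification with the trivial edge-count bound.
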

		\begin{definition}\label{vntd}
			A nice tree decomposition is called a very nice tree decomposition  if each \textsc{leaf} bag contains just a single vertex.
		\end{definition}
		\begin{figure}[htbp]
			\begin{center}
				\subfigure[\label{maingraph}]\centering
				\resizebox*{.3\textwidth}{!}{			
					\begin{tikzpicture}			
					\node[shape=circle,draw=black] (1) at (0,0) {1};
					\node[shape=circle,draw=black] (2) at (-2,-2) {2};
					\node[shape=circle,draw=black] (3) at (0,-2) {3};
					\node[shape=circle,draw=black] (4) at (2,-2) {4};
					\node[shape=circle,draw=black] (5) at (4,-2) {5};			
					\Edge[label = 1](1)(2);
					\Edge[label = 2](1)(3);
					\Edge[label = 3](1)(4);
					\Edge[label = 4](2)(3);
					\Edge[label = 5](3)(4);
					\Edge[label = 6](4)(5);
					\end{tikzpicture}}
				\hfil
				\subfigure[\label{NTD}]\centering
				\resizebox*{.5\textwidth}{!}{			
					\begin{tikzpicture}			
					\node[shape=circle,draw=black] (1) at (0,0) {1};
					\node[shape=circle,draw=white] (13) at (2,0) {12: \textsc{forget} bag};
					\node[shape=ellipse,draw=black] (2) at (0,-1.25) {1, 4};
					\node[shape=circle,draw=white] (14) at (2,-1.25) {11: \textsc{join} bag};
					\node[shape=ellipse,draw=black] (3) at (-1,-2.5) {1, 4};
					\node[shape=circle,draw=white] (15) at (-3.25,-2.5) {6: \textsc{forget} bag};
					\node[shape=ellipse,draw=black] (4) at (-1,-3.75) {1, 4, 3};
					\node[shape=circle,draw=white] (16) at (-4,-3.75) {5: \textsc{introduce} bag};
					\node[shape=ellipse,draw=black] (5) at (-1,-5) {1, 3};
					\node[shape=circle,draw=white] (17) at (-3.25,-5) {4: \textsc{forget} bag};
					\node[shape=ellipse,draw=black] (6) at (-1,-6.25) {1, 2, 3};
					\node[shape=circle,draw=white] (18) at (-4,-6.25) {3: \textsc{introduce} bag};
					\node[shape=ellipse,draw=black] (7) at (-1,-7.5) {2, 3};
					\node[shape=circle,draw=white] (19) at (-3.25,-7.5) {2: \textsc{forget} bag};
					\node[shape=circle,draw=black] (8) at (-1,-8.75) {2};
					\node[shape=circle,draw=white] (20) at (-3,-8.75) {1: \textsc{leaf} bag};
					\node[shape=ellipse,draw=black] (9) at (1.5,-2.5) {1, 4};
					\node[shape=circle,draw=white] (21) at (4.25,-2.5) {10: \textsc{introduce} bag};
					\node[shape=circle,draw=black] (10) at (1.5,-3.75) {4};
					\node[shape=circle,draw=white] (22) at
					(3.5,-3.75) {9: \textsc{forget} bag}; 
					\node[shape=ellipse,draw=black] (11) at (1.5,-5) {4, 5};
					\node[shape=circle,draw=white] (23) at (4.25,-5) {8: \textsc{introduce} bag};
					\node[shape=circle,draw=black] (12) at (1.5,-6.25) {5};	
					\node[shape=circle,draw=white] (24) at (3.5,-6.25) {7: \textsc{leaf} bag};
					
					\draw[draw=black,line width=1pt] (1) to (2);
					\draw[draw=black,line width=1pt] (2) to (3);
					\draw[draw=black,line width=1pt] (3) to (4);
					\draw[draw=black,line width=1pt] (4) to (5);   
					\draw[draw=black,line width=1pt] (5) to (6);
					\draw[draw=black,line width=1pt] (6) to (7);
					\draw[draw=black,line width=1pt] (7) to (8);
					\draw[draw=black,line width=1pt] (2) to (9);
					\draw[draw=black,line width=1pt] (9) to (10);
					\draw[draw=black,line width=1pt] (10) to (11);  
					\draw[draw=black,line width=1pt] (11) to (12);  
					\end{tikzpicture}}
				\caption{(a) Graph $G1$, (b) One of nice tree decomposition of $G1$ with treewidth 2. The bags of tree decomposition  are numbered according to a preorder traversal on it.}
				\label{PGandTD}
			\end{center}
		\end{figure}
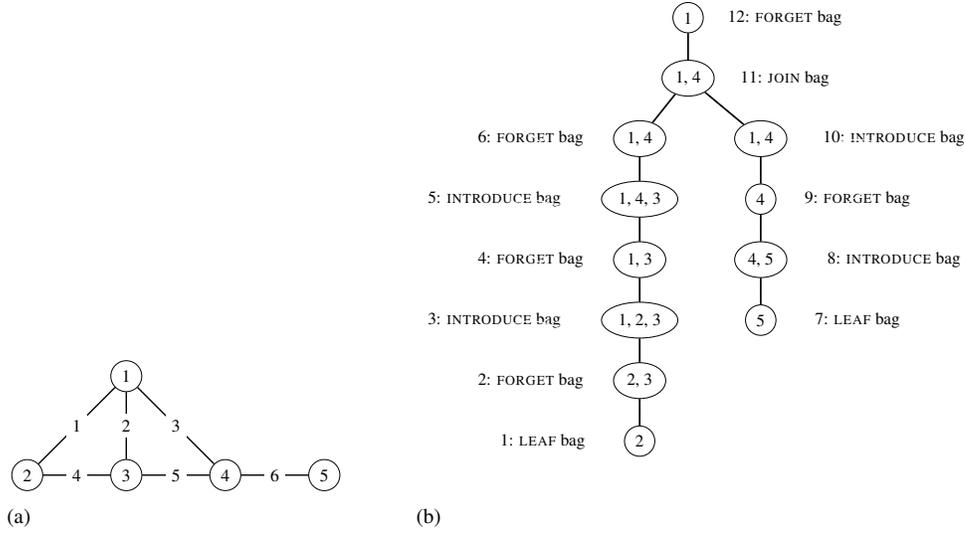
		
	\section{Fundamental Concepts}\label{Sec3}
		The fundamental idea that we use to solve the AMDS is transferring the edge power to the vertex power.  
		
		Let $\mathcal{MD}$ be a mixed domination set, $X_{i}$ be a bag in a tree decomposition of $G$ and $v \in X_i$ be a vertex of $G$. 
		The rules for transferring the domination power of the edges to the vertices are as follows:
		\begin{itemize}
			\item \textbf{Power 2:} If $v \in \mathcal{MD}$, then the power of $v$ is set equal to 2. In this case, this vertex can dominate the elements in $N^{md}_G[r]$.
			\item \textbf{Power 1:} If $v \notin \mathcal{MD}$ and at least one of the incident edges of the vertex $v$ is in $\mathcal{MD}$, then the power of $v$ is set equal to $1$. This vertex can dominate all of its incident edges.
			\item \textbf{Power 0:} If $v$ and all of its incident edges are not in $\mathcal{MD}$, then the power of $v$ is set equal to $0$. This vertex cannot dominate any edges or vertices.
		\end{itemize}
		Given a vertex $v \in X_{i}$, there are seven situations to consider which are illustrated in the Table~\ref{bag condition vertex}. The intuition behind each of these cases is as follows: 
		\begin{enumerate}
			\item In the first case, the vertex $v$ and at least one of its incident edges  belong to $\mathcal{MD}$,
			\item In the second case, $v$ belong to $\mathcal{MD}$, however, none of its edges belong to  $\mathcal{MD}$,
			\item In the third case, $v$ does not belong to $\mathcal{MD}$ but at least one of its incident edges
			belongs to  $\mathcal{MD}$,
			\item In the fourth case, $v$ and all of its edges do not belong to $\mathcal{MD}$, and $v$ and its edges are dominated,
			\item In the fifth case, $v$ and all of its edges do not belong to $\mathcal{MD}$, and $v$ is not dominated but all its edges are dominated,
			\item In the sixth case, $v$ and all of its edges do not belong to $\mathcal{MD}$, and $v$ is dominated but at least one of its edges is not dominated,
			\item In the seventh case, $v$ and all of its edges do not belong to $\mathcal{MD}$, and $v$ and at least one of its edges is not dominated.
		\end{enumerate}
		
		\begin{table}[htbp]
			\centering
			\label{bag condition vertex}
			\caption{The seven possible situation for a vertex in a  bag.}
			\scalebox{0.7}{
			\begin{tabular}{ r|c|c|c|c|c|c|}
						\multicolumn{1}{r}{}
						&  \multicolumn{1}{c}{$v\in \mathcal{MD}$}
						&  \multicolumn{1}{c}{$N^{e}_{X_{i}}(v)\in \mathcal{MD}$ }
						&  \multicolumn{1}{c}{vertex cover}
						& \multicolumn{1}{c}{edge cover} 
						&  \multicolumn{1}{c}{vertex power}
						&  \multicolumn{1}{c}{illustration}
						\\
						\cline{2-7}
						1 & $v \in \mathcal{MD}$ & $\exists e \in N^e_{X_{i}}(v) , e \in \mathcal{MD}$ & $v$ is covered & $\forall e \in N^e_{X_{i}}(v), e$ is covered & 2 &  \ref{l1}.
						\\
						\cline{2-7}
						2 & $v \in \mathcal{MD}$ & $\forall e \in N^e_{X_{i}}(v) , e \notin \mathcal{MD}$  & $v$ is covered & $\forall e \in N^e_{X_{i}}(v), e$ is covered & 2 & \ref{l2}.
						\\
						\cline{2-7}
						3 & $v \notin \mathcal{MD}$ & $\exists e \in N^e_{X_{i}}(v) , e \in \mathcal{MD}$ & $v$ is covered & $\forall e \in N^e_{X_{i}}(v), e$ is covered & 1 & \ref{l3}.
						\\
						\cline{2-7}
						4 & $v \notin \mathcal{MD}$ & $\forall e \in N^e_{X_{i}}(v), e \notin \mathcal{MD}$ & $v$ is covered & $\forall e \in N^e_{X_{i}}(v), e$ is covered & 0 &   \ref{l4}.
						\\
						\cline{2-7}
						5 & $v \notin \mathcal{MD}$ & $\forall e \in N^e_{X_{i}}(v), e \notin \mathcal{MD}$ & $v$ is not covered & $\forall e \in N^e_{X_{i}}(v), e$ is covered & 0 & \ref{l5}.
						\\
						\cline{2-7}
						6 & $v \notin \mathcal{MD}$ & $\forall e \in N^e_{X_{i}}(v), e \notin \mathcal{MD}$ & $v$ is covered & $\exists e \in N^e_{X_{i}}(v), e$ is not covered & 0 &  \ref{l6}.
						\\
						\cline{2-7}
						7 & $v \notin \mathcal{MD}$ & $\forall e \in N^e_{X_{i}}(v), e \notin \mathcal{MD}$ & $v$ is not covered & $\exists e \in N^e_{X_{i}}(v), e$ is not covered & 0 &    \ref{l7}.
						\\
						\cline{2-7}
					\end{tabular}
				}
		\end{table}
		In Figure~\ref{bagpower} we illustrate these situations where a rectangle indicates a bag; 
		the vertices or edges that are in $\mathcal{MD}$ are drawn as disk or bold line, respectively.  
		Not covered elements are drawn by dotted lines or circles and the remaining elements are covered. 
		In Figures~\ref{l2},~\ref{l4} and~\ref{l5}, we use an arc sector for incident edges of the selected vertex 
		in the bag. It means all of the edges are covered, but in Figures~\ref{l1} and~\ref{l3} at least one of
		incident edges of the vertex is in $\mathcal{MD}$ and in Figures~\ref{l6} and~\ref{l7} 
		at least one of incident edges of the vertex is not covered. 
		\begin{figure}[htbp]
			\begin{center}
				\subfigure[ 1\label{l1}]{\includegraphics[height=1.2cm]{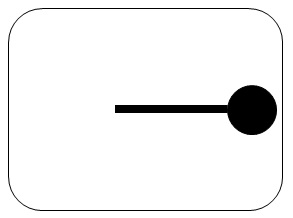}}
				\hfil
				\subfigure[2\label{l2}]{\includegraphics[height=1.2cm]{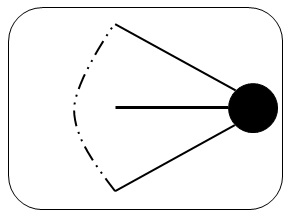}}
				\hfil
				\subfigure[3\label{l3}]{\includegraphics[height=1.2cm]{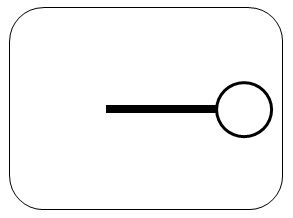}}
				\hfil
				\subfigure[4\label{l4}]{\includegraphics[height=1.2cm]{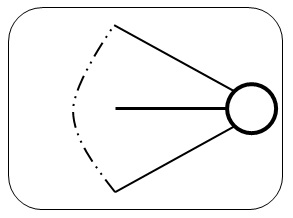}}
				\hfil
				\subfigure[5\label{l5}]{\includegraphics[height=1.2cm]{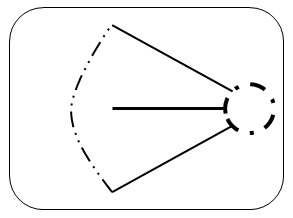}}
				\hfil
				\subfigure[6\label{l6}]{\includegraphics[height=1.2cm]{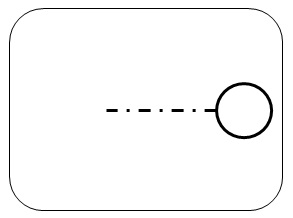}}
				\hfil
				\subfigure[7\label{l7}]{\includegraphics[height=1.2cm]{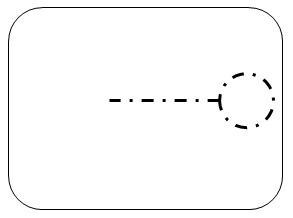}}
				\caption{Different situations for a vertex with respect to a  bag.}
				\label{bagpower}
			\end{center}
		\end{figure}
		Let $T$ be a  very nice tree decomposition (recall Definition~\ref{vntd}) for the graph $G$. For each bag $X_{i}$ in $T$, we define the set $X^{\triangle}_{i} $ as 
		\begin{equation}
		X^{\triangle}_{i}  = \{v \in V \mid v \text{ is in the descendant bags of } X_{i} \mbox{ within } \mathcal{I}\}.
		\end{equation}
		The induced subgraph of $G$ with vertices $X_{i}$ $(\text{or }X^{\triangle}_{i})$ is denoted by  $G_{i}$ $(\text{or }G_{i}^{\triangle})$.
		Let $\mathcal{MD}$ be a mixed dominating set for the bag $X_i$ and $v \in X_{i} $. For a vertex $v$, there
		are nine possible situations to consider based on the intersection of $G_{i}^{\triangle}$ and bag $X_{i}$.
		These are given in Table~\ref{condition vertex}. Assuming that all of the elements in
		$G_{i}^{\triangle}/G_{i}$ are dominated, seven situations of the Table~\ref{condition vertex} are the same as
		earlier given in Table~\ref{bag condition vertex}. However, to cover the cases that edges in $G_{i}^{\triangle}/G_{i}$ are not dominated, two extra cases are possible for $v$.
		\begin{itemize}
			\item[8.] The vertex $v$ is dominated, however at least one of its  edges in $G_{i}^{\triangle}/G_{i}$ is not dominated.
			\item[9.] The vertex $v$ and at least one of its  edges  in $G_{i}^{\triangle}/G_{i}$ are not dominated.
		\end{itemize}
		In both case, the vertex $v$ is not in $\mathcal{MD}$, and at least one of its incident edges in $G_{i}^{\triangle}/G_{i}$ is not dominated. However the vertex $v$ in state 8 is dominated, and in state 9 is not dominated.
		\begin{table}[htbp]
			\label{condition vertex}
			\caption{Different cases for a vertex $v \in X_i$ in correspondence to the edges and vertices in $X_i$.}
			\scalebox{0.65}{
			\begin{tabular}{ r|c|c|c|c|c|c|c| }
						\multicolumn{1}{r}{}
						&  \multicolumn{1}{c}{$v\in \mathcal{MD}$}
						&  \multicolumn{1}{c}{$N^e_{X^{\triangle}_{i}}(v)\in \mathcal{MD}$ }
						&  \multicolumn{1}{c}{vertex cover}
						& \multicolumn{1}{c}{edge is covered in this bag} 
						& \multicolumn{1}{c}{edge is covered in previous bags}
						&  \multicolumn{1}{c}{power}
						&  \multicolumn{1}{c}{illustration }
						\\
						\cline{2-8}
						1 & $v \in \mathcal{MD}$ & $\exists e  \in N^e_{X^{\triangle}_{i}}(v), e \in \mathcal{MD}$ & $v$ is covered & $\forall e \in N^e_{X_{i}}(v), e $ is covered & $\forall e \in N^e_{X^{\triangle}_{i}}(v), e $ is covered & 2 &  Figure \ref{l11}
						\\
						\cline{2-8}
						2 & $v \in \mathcal{MD}$ & $\forall e \in N^e_{X^{\triangle}_{i}}(v), e \notin \mathcal{MD}$  & $v$ is  covered & $\forall e \in N^e_{X_{i}}(v), e $ is covered &$\forall e \in N^e_{X^{\triangle}_{i}}(v), e $ is covered & 2 & Figure \ref{l12}
						\\
						\cline{2-8}
						3 & $v \notin \mathcal{MD}$ & $\exists e  \in N^e_{X^{\triangle}_{i}}(v), e \in \mathcal{MD}$ & $v$ is  covered & $\forall e \in N^e_{X_{i}}(v), e$ is covered &$\forall e \in N^e_{X^{\triangle}_{i}}(v), e$ is covered & 1 & Figure \ref{l13}
						\\
						\cline{2-8}
						4 & $v \notin \mathcal{MD}$ & $\forall e \in N^e_{X^{\triangle}_{i}}(v), e \notin \mathcal{MD}$ & $v$ is  covered & $\forall e \in N^e_{X_{i}}(v), e$ is covered & $\forall e \in N^e_{X^{\triangle}_{i}}(v), e$ is covered & 0 &   Figure \ref{l14}
						\\
						\cline{2-8}
						5 & $v \notin \mathcal{MD}$ & $\forall e \in N^e_{X^{\triangle}_{i}}(v), e \notin \mathcal{MD}$ & $v$ is not covered & $\forall e \in N^e_{X_{i}}(v), e$ is covered &$\forall e \in N^e_{X^{\triangle}_{i}}(v), e$ is covered & 0 & Figure \ref{l15}
						\\
						\cline{2-8}
						6 & $v \notin \mathcal{MD}$ & $\forall e \in N^e_{X^{\triangle}_{i}}(v), e \notin \mathcal{MD}$ & $v$ is  covered & $\exists e  \in N^e_{X_{i}}(v), e$ is not covered &$\forall e \in N^e_{X^{\triangle}_{i}}(v), e$ is covered & 0 &  Figure \ref{l16}
						\\
						\cline{2-8}
						7 & $v \notin \mathcal{MD}$ & $\forall e \in N^e_{X^{\triangle}_{i}}(v), e \notin \mathcal{MD}$ & $v$ is not covered & $\exists e  \in N^e_{X_{i}}(v), e$ is not covered &$\forall e \in N^e_{X^{\triangle}_{i}}(v), e$ is covered & 0 &    Figure \ref{l17}
						\\
						\cline{2-8}
						8 & $v \notin \mathcal{MD}$ & $\forall e \in N^e_{X^{\triangle}_{i}}(v), e \notin \mathcal{MD}$ &  $v$ is  covered & $ e $ is or is not covered &$\exists e  \in N^e_{X^{\triangle}_{i}}(v), e$ is not covered & 0 &  Figure \ref{l18}
						\\
						\cline{2-8}
						9 & $v \notin \mathcal{MD}$ & $\forall e \in N^e_{X^{\triangle}_{i}}(v), e \notin \mathcal{MD}$ & $v$ is not covered  & $ e $ is or is not covered &$\exists e  \in N^e_{X^{\triangle}_{i}}(v), e $ is not covered & 0 &  Figure \ref{l19}
						\\
						\cline{2-8}
					\end{tabular}
					}
		\end{table}
		The Figure~\ref{prepower} is similar to the Figure~\ref{bagpower}, except that it  shows different
		situations for a vertex $v \in X_i$ with respect to a bag $X_i$, and it considers the edges and vertices
		appearing in previous bags.
		\begin{figure}[htbp]
			\begin{center}
				\subfigure[ 1\label{l11}]{\includegraphics[height=.6cm]{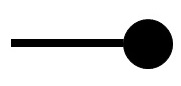}}
				\hfil
				\subfigure[2\label{l12}]{\includegraphics[height=1.2cm]{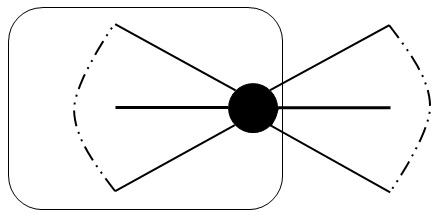}}
				\hfil
				\subfigure[3\label{l13}]{\includegraphics[height=.6cm]{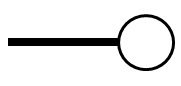}}
				\hfil
				\subfigure[4\label{l14}]{\includegraphics[height=1.2cm]{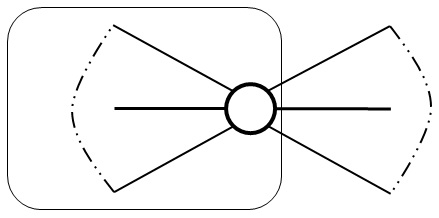}}
				\hfil
				\subfigure[5\label{l15}]{\includegraphics[height=1.2cm]{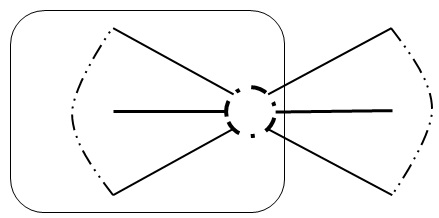}}
				\hfil\\
				\subfigure[6\label{l16}]{\includegraphics[height=1.2cm]{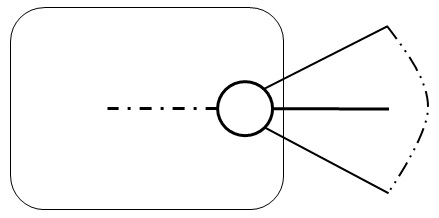}}
				\hfil
				\subfigure[7\label{l17}]{\includegraphics[height=1.2cm]{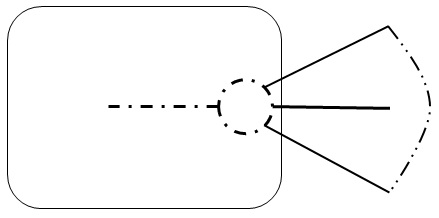}}
				\hfil
				\subfigure[8\label{l18}]{\includegraphics[height=1.2cm]{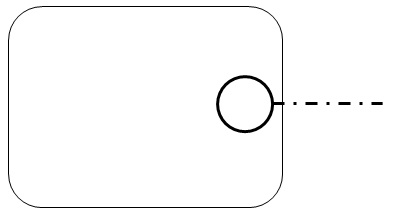}}
				\hfil
				\subfigure[9\label{l19}]{\includegraphics[height=1.2cm]{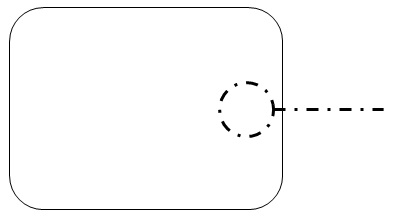}}
				\caption{Different cases for a vertex $v \in X_i$ in correspondence to the edges and vertices in $X_i$.}
				\label{prepower}
			\end{center}
		\end{figure}
		For each edge, there are three different possible cases to consider: (1) it is in the mixed dominating
		set, (2) it is not in the mixed dominating set and it is covered by some element(s), or (3) it is not in  the mixed dominating set and  is  not covered by any element.
		These cases are summarized in Table~\ref{Possibleconditionforanedge}.
		\begin{table}[htpb]	
			\label{Possibleconditionforanedge}
			\caption{Possible condition for an edge.}
			\centering
			\begin{tabular}{ r|c|}
						\cline{2-2}
						1 & edge belongs to the mixed dominating set.   \\
						\cline{2-2}
						2 & edge does not belong to mixed dominating set, but it is covered     \\
						\cline{2-2}
						3 & edge is not in mixed dominating set and also is not covered.    \\
						\cline{2-2}
					\end{tabular}
		\end{table}
		For our algorithm, we keep two types of data tables:  (1) The bag table $Btable_i$ saves all of the possible states 
		for the vertices and the edges in the bag $X_i$. (2) The status table $Stable_i$ saves all possible states 
		for the vertices and the edges in the bag $X_i$ with respect to $G_{i}^{\triangle}$. 
		Each table is constructed as follows: in both $Btable_{i}$ and $Stable_{i}$ each row represents a 
		possible solution and each column corresponds to a vertex or an edge in the induced graph $X_i$. 
		In addition they both have a column with a cost value that shows the number of mixed domination members in that row. 
		The ordering of cells in each row of these tables is $v_{0}, \; ..., \; v_{\tw}, \; e_{1}, \; ..., \; e_{\binom{\tw+1}{2} }, \; cost$.
		
	\section{Our Proposed  Algorithm}\label{Sec4}
		In this section, we present our proposal algorithm to find the mixed domination number for a graph with 
		bounded tree-width. This algorithm consists of three phases:
		\begin{itemize}
			\item[Step 1:] 
				Let $G=(V,E)$ be an unweighted and undirected graph with constant tree-width. We compute and then use a
				standard very nice tree decomposition with width $\tw$. It can be done in time $O(n)$ using  Lemma~\ref{lemordertree-width}.
			\item[Step 2:]  We find a postorder traversal $\tau$ on the very nice tree decomposition.
				The traversal $\tau$ begins from the leftmost leaf and then goes up in the tree until it reaches the first \textsc{join} bag. Then, it goes to the leftmost leaf on the right subtree of the \textsc{join} bag recursively. It goes up if both children of the \textsc{join} bag are visited and visits the \textsc{join} bag itself and continues until it reaches the root. This phase can be computed in $O(n)$ time.
			\item[Step 3:] We follow elements of $\tau$ in order and update the corresponding tables for each bag as follows:
				\begin{itemize}
					\item Whenever we reach a \textsc{leaf} bag, we create a new table which contains all of the possible 
					cases that the bag can be in.
					\item Whenever we reach an \textsc{introduce} bag, we construct a new table for the bag from its child table.
					\item Whenever we reach a \textsc{join} bag, we construct a new table for the bag from its children tables.
					\item Whenever we reach a \textsc{forget} bag, we construct the table to obtain all of the possible states that vertices in this bag can have by considering edges and vertices which appeared heretofore.
				\end{itemize} 
			It is clear that the time complexity of this phase relates to the time spent in processing each bag 
				in the traversal $\tau$. This process includes time to create a table for each \textsc{leaf} bag and 
				time to combine two tables for the \textsc{introduce}, the \textsc{forget} and the \textsc{join} bags. A  new table is created with two rows, running in constant time, therefore these operations add a constant time factor. 
				To combine two tables we consider the worst case. 
				The table for a \textsc{join} or an \textsc{introduce} bag can be created in $O(\tw^{2})$ steps and for a 
				\textsc{forget} bag requires $O(\tw)$ steps. 
				The worst case for \textsc{join} bag happens when they have all possible cases which lead to
				$9^{\tw+1}\times 3^{\binom{\tw+1}{2}}$ rows. Therefore, the  time complexity of this phase equals
				$O(9^{\tw}\times 3^{\tw^{2}}\times \tw^{2})$. 
		\end{itemize}
		The algorithm described so far is polynomial-time with respect to the size of $T$, 
		Lemma~\ref{lemordertree-width} shows this size is $O(n)$. 
		However, it is exponential with respect to the tree-width of $T$, $\tw$. 
		The following theorem states the time complexity of our proposed algorithm. 
		\begin{theorem} \label{timecom}
			The running time of the described approach is $O(9^{\tw}\times 3^{\tw^{2}}\times \tw^{2}\times n)$.
		\end{theorem}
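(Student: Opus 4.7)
The plan is to decompose the running time into the three phases of the algorithm already described, bound each contribution, and observe that the dominating term comes from processing a single bag in the worst case.

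First I would handle Steps~1 and~2 by directly invoking Lemma~\ref{lemordertree-width}: a very nice tree decomposition of width $\tw$ with $O(n)$ bags is produced in $O(n)$ time, and a postorder traversal $\tau$ of this decomposition is likewise computed in $O(n)$. So the total cost outside of the dynamic programming itself is $O(n)$, which is absorbed into any bound of the form $O(f(\tw)\cdot n)$.

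Next I would bound the size of any table $Btable_i$ or $Stable_i$. Each bag $X_i$ contains at most $\tw+1$ vertices and at most $\binom{\tw+1}{2}$ potential edges in the induced graph $G_i$. By Table~\ref{condition vertex} every vertex is in exactly one of nine admissible states, and by Table~\ref{Possibleconditionforanedge} every edge is in exactly one of three states. Hence the number of rows in $Stable_i$ is at most
\begin{equation*}
9^{\tw+1}\times 3^{\binom{\tw+1}{2}} \;=\; O\!\left(9^{\tw}\times 3^{\tw^{2}}\right),
\end{equation*}
and the analogous bound holds for $Btable_i$. I would next argue, case by case on the four bag types (leaf, introduce, forget, join), that the incremental work needed to produce each row of the new table from the child table(s) is $O(\tw^2)$: for introduce and join bags one has to update the status of each of the $O(\tw)$ vertices against each of the $O(\tw)$ new or merged edges, and for forget bags the work per row is only $O(\tw)$, since one only removes a single vertex and re-checks its incident edges. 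The leaf case is constant. Thus every bag is processed in time $O\!\left(9^{\tw}\times 3^{\tw^{2}}\times \tw^{2}\right)$.

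Finally, since there are $O(n)$ bags in $\tau$ by Lemma~\ref{lemordertree-width}, summing the per-bag cost over the traversal yields the claimed bound $O\!\left(9^{\tw}\times 3^{\tw^{2}}\times \tw^{2}\times n\right)$. The main obstacle is the per-bag analysis, specifically verifying that the join bag, which must combine two child tables of size $9^{\tw+1}\times 3^{\binom{\tw+1}{2}}$ without blowing up past a single factor of this quantity; I expect to need to observe that compatible pairs of rows are determined by matching the vertex/edge labels coordinate-by-coordinate, so that the combined table is indexed by a single consistent state vector rather than by arbitrary pairs, keeping the row count under the same bound.
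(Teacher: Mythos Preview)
Your proposal follows the paper's own argument essentially step for step: the paper justifies Theorem~\ref{timecom} by the same three-phase decomposition, bounding the table size by $9^{\tw+1}\times 3^{\binom{\tw+1}{2}}$, asserting $O(\tw^2)$ work per table entry, and multiplying by the $O(n)$ bags from Lemma~\ref{lemordertree-width}. The one place you go beyond the paper is in flagging the \textsc{join} bag as the delicate case; note, however, that your coordinate-matching observation bounds only the \emph{output} row count, whereas Algorithm~\ref{MixedAlgorithm} actually loops over all \emph{pairs} of rows from the two child tables---the paper itself does not address this and simply asserts the per-bag bound, so your write-up is at least as complete as the paper's own justification.
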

		Our proposed dynamic programming algorithm works on $\tau$ which is the postorder traversal on a very nice tree decomposition of $G$. When the algorithm visits a bag, it describes the partial solutions to AMDS and as it continues to other bags, it extends the created partial solutions. These partial solutions need to satisfy all of the problem specific constraints in $G_{i}^{\triangle}$ except for the vertices in $X_{i}$ and their incident edges in $G_{i}^{\triangle} \backslash G_{i}$. The status tables  $Stable$ are used to store these partial solutions. In other words, a $Stable$ characterizes the partial solutions and each row in $Stable$ contains a valid assignment for vertices and edges in $G_{i}$.
		
		To compute the $Stable$ of each bag $i$, our algorithm uses $Btable$ of the children of bag $i$. Since these
		tables are computed bottom-up, the final solution of the MDS appears in the root of $T$. So, it can be
		extracted by inspecting the table of the root. The Algorithm~\ref{MixedAlgorithm}  demonstrates how we achieve the final answer $\gamma_{md}$.

		\begin{algorithm}	 
			\caption{ The algorithm to compute $\gamma_{md}$. }
			\begin{algorithmic}[h!]
				\\ INPUT: Postorder traversal $\tau$ on a standard very nice tree decomposition of graph $G$. 
				\\ OUTPUT: $\gamma_{md}$ for $G$.
				\For{$ i \gets 1 $ to    $ |\tau|$}
				\If{ $X_i$ is a bag leaf}
				\State  Create a new table with two rows each corresponding to cases 2 and 5 for the isolated
				vertex in $X_i$ (see Section~\ref{Create leaf}) .
				\ElsIf { $X_i$ is an \textsc{introduce} bag}
				\For{$ \ell_{1} \gets 1 $ to  number of rows in  $Stable_{i-1}$}
				\For{$ \ell_{2} \gets 1 $ to  number of rows in  $Btable_{i}$}
				\State 
				Call Algorithm~\ref{Introducealgorithm} with inputs $r_{Stable_{i-1}}(\ell_{1},:)$ and $ r_{Btable_{i}}(\ell_{2},:)$. 
				\EndFor
				\EndFor
				\ElsIf { $X_i$ is a \textsc{forget} bag }
				\For{$ \ell_{1} \gets 1 $ to   number of rows in  $Stable_{i-1}$}
				\State 
				Call Algorithm~\ref{forgetal} with input $r_{Stable_{i-1}}(\ell_{1},:)$.
				\EndFor
				\ElsIf { $X_i$ is a \textsc{join} bag  }
				\For{$ \ell_{1} \gets 1 $ to  number of rows in  $Stable_{i1}$}
				\For{$ \ell_{2} \gets 1 $ to  number of rows in  $Stable_{i2}$}
				\State Call Algorithm~\ref{joinalgorithm} with inputs $r_{Stable_{i1}}(\ell_{1},:)$ and $ r_{Stable_{i2}}(\ell_{2},:)$.
				\EndFor
				\EndFor
				\EndIf  
				\State Add the  created rows to $Stable_{i}$.
				\EndFor
				\end{algorithmic}\label{MixedAlgorithm}
		\end{algorithm} 
		
		When Algorithm~\ref{MixedAlgorithm} observes a \textsc{leaf} bag, it creates a new table which saves 
		all of the possible states for the only vertex in that bag. 
		The algorithm when following the traversal $\tau$  calls Algorithms~\ref{Introducealgorithm},~\ref{forgetal}
		and~\ref{joinalgorithm} when observing introduce, \textsc{forget} and \textsc{join} bags respectively, and return $\gamma_{md}$ as output. Note that it is possible that combining two rows in different levels 
		may create the same rows. In this case we store the row with minimum cost in $ Stable_{i}$.  
		To avoid searching and sorting to find these repeated states, we use a coding to store the created rows. 
		We use a help table in which each created row has a specific position in it. When a row is created while
		combining two tables, the help table is checked and if there is a row with lower cost, then the lower cost is
		considered and the help table is updated accordingly. Finally  Algorithm~\ref{MixedAlgorithm} inspects the root table and finds $\gamma_{md}$. 
		For illustration, we consider the graph in Figure~\ref{maingraph} and one of its nice tree decomposition 
		(see Figure~\ref{NTD}) as an example for our proposed algorithm. 
		The output of the algorithm is  $\gamma_{md}=2$. Next, we describe how the tables are filled and partial solutions are computed.
		\begin{figure}[htbp]
			\begin{center}
				\includegraphics[width=0.35 \textwidth]{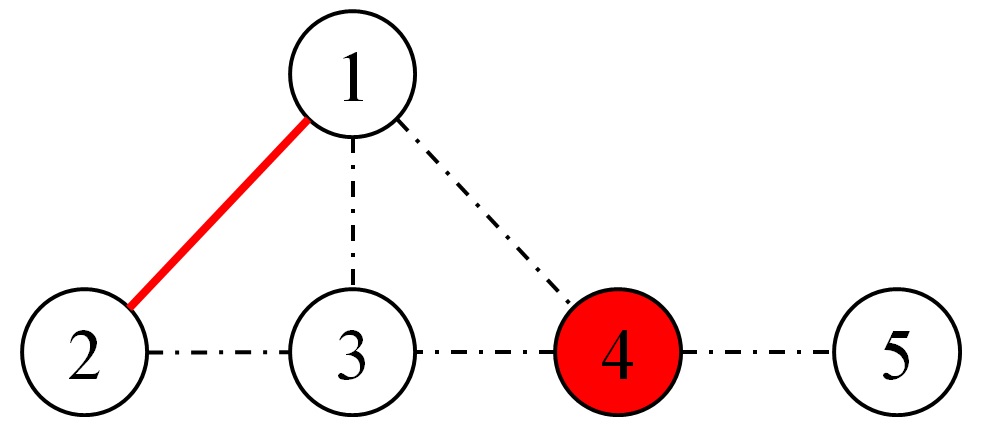}
				\caption{Mixed domination of the graph $G1$.}
				\label{mixeddomination}
			\end{center}
		\end{figure}
		
		\subsection{Status table construction for \textsc{leaf} bags}\label{Create leaf}
		During the traversal $\tau$, when we observe a \textsc{leaf} bag, a new table is created which saves all of the possible states for the only vertex in that bag.
		Let $X_{i} $  be a \textsc{leaf} bag since we are working with a standard nice tree decomposition, every
		\textsc{leaf} bag contains exactly one vertex. For vertex $v \in X_{i} $, the status table $i$ denoted as
		$Stable_{i}$ contains two rows. The first row corresponds to the situation in with vertex $v$ belongs to an
		optimal mixed domination set, and the last one is for the case in which vertex $v$ does not belong to an
		optimal mixed domination set. Therefore, this table has two rows wherein the first row, the value of the
		vertex is $ ``2" $   by costing $ ``1" $, and in the second row, the value of the vertex is $ ``5" $   by
		costing $ ``0" $. The status table of \textsc{leaf} bag 1 in Figure~\ref{NTD} is shown in Table~\ref{leaftable}.
		\begin{table}
			\label{leaftable}
			\caption{The status table of \textsc{leaf} bag 1 in Figure~\ref{NTD}.}
			\centering
			\begin{tabular}{|l|lll|lll|r|}
						\multicolumn{1}{c}{}&	\multicolumn{3}{c}{vertices} & \multicolumn{3}{c}{edges}& \multicolumn{1}{c}{}\\
						\cline{2-7}
						\multicolumn{1}{l|}{case}    & 2 & - & - & - & - & - & \multicolumn{1}{c}{cost} \\
						\cline{1-8}
						1 & 2  & 0 & 0 & 0 & 0 & 0 & 1   \\
						2 & 5 & 0 & 0 & 0 & 0 & 0 & 0   \\
						\cline{1-8}
					\end{tabular}
		\end{table}
	\subsection{ Construction of the status table for an \textsc{introduce} bag}
		An \textsc{introduce} bag $X_i$  has one more vertex than its child bag $X_{i-1}$ as well as the edges that are adjacent to the new vertex and the vertices of the previous bag.  Assume  $ Stable_{i-1}$ contains all of the possible states that can occur up to this level for visited edges and vertices.
		After adding the new vertex and the corresponding edges we need to add all of the new possible states to the
		new $ Stable_i$. To do so, we compute  $Stable_{i} $ as $ Stable_{i-1} \otimes Btable_{i}$.  Assume that $n$
		is the size of bag $X_{i}$, $e$ is the number of edges with both ends in $X_{i}$,
		$r_{Stable_{i-1}}(\ell_{1},:)$  and $r_{Btable_{i}}(\ell_{2},:)$ are two rows of $ Stable_{i-1}$ and
		$Btable_{i}$, and $j$ refers to the entries of a row of a table.  Then,  Equation~\ref{seq1} describes the construction of the entries of $\ell$th row of  $Stable_{i}$.
		\begin{equation} \label{seq1}
		r_{Stable_{i}}(\ell,j) =\begin{cases}
		r_{Stable_{i-1}}(\ell_{1},j)  \star_{Int} r_{Btable_{i}}(\ell_{2},j), & \text{if $0 \leq j \leq \tw $ },\\
		r_{Stable_{i-1}}(\ell_{1},j)  \ast_{Int} r_{Btable_{i}}(\ell_{2},j), & \text{if $\tw+1 \leq j \leq
			\tw+\binom{\tw+1}{2} $ },\\
		r_{Stable_{i-1}}(\ell_{1},j)  $ + $ r_{Btable_{i}}(\ell_{2},j) $ - $
		|A| $ - $ |B|, &\text{if $ j = \tw+\binom{\tw+1}{2}+1$ }, \\ 
		\end{cases}
		\end{equation}
		where
		\begin{equation}
		A=\{\alpha \mid (r_{Stable_{i-1}}(\ell_1,\alpha) \leq   2   \wedge  \hspace{3pt}
		r_{Btable_{i}}(\ell_2,\alpha)\leq  2)   \wedge  (0 \leq \alpha \leq \tw )  \},
		\end{equation}
		\begin{equation}
		B = \left\{\alpha \mid \left(r_{Stable_{i-1}}(\ell_1,\alpha) =   1   \wedge  \hspace{3pt}
		r_{Btable_{i}}(\ell_2,\alpha) 
		=  1)   \wedge ( \tw+1 \leq \alpha \leq \tw  + {\textstyle \binom{\tw+1}{2}} \right) \right\}.
		\end{equation}
		In Equation~\ref{seq1}, two multiplication operators  $\star_{Int}$ and $\ast_{Int}$ are used to compute the
		entries of $Stable_{i}$. The multiplication tables for these operators are given in  Tables~\ref{starint}
		and~\ref{astint}.  Note that value $ ``-" $   \, in Tables~\ref{starint},~\ref{astint},~\ref{Joinstar}
		and~\ref{Joinast} never happen.  The Algorithm~\ref{Introducealgorithm} describes aforementioned approach in constructing rows of  $ Stable_{i}$ in a formal manner. 
		Algorithms~\ref{Introducealgorithm},~\ref{forgetal} and~\ref{joinalgorithm}  calculate the value of $\ell$
		using Equation~\ref{ell}.  After constructing an entire row in $ Stable_{i}$, the value of $\ell$ is obtained
		as follows: Assume that $(r_{0}, r_{1}, ..., r_{\tw}, r_{\tw+1}, ..., r_{\tw+\binom{\tw+1}{2}},
		r_{\tw+\binom{\tw+1}{2}+1})$ is the output of the algorithms, then number $\ell$ shows  the number of a row
		that this output is saved in it. Equation~\ref{ell} shows how the value of $\ell$ is calculated.  Array $r_{HP}$ is used to save entries of a row until the value of $\ell$ is computed.
		If the $\ell$th row in $ Stable_{i}$ is empty, the algorithm saves the output. However if the $\ell$th row
		was filled, and if the new $r_{\tw+\binom{\tw+1}{2}+1}$ is less than the existing one, the algorithm replaces
		the value of $r_{\tw+\binom{\tw+1}{2}+1}$.  
		\begin{equation}\label{ell}
		\ell=\left( \sum_{i=0}^{\tw} r_i\times9^{i}\right)  + \left(\left(\sum_{j=1}^{\binom{\tw+1}{2}}
		r_{\tw+j}\times4^{j-1}\right) \times 9^{\tw+1}\right).
		\end{equation}
		
				\begin{table}
					\label{starint}
					\caption{Multiplication operation for vertices in \textsc{introduce} bag.}
					\centering
					\begin{tabular}{ r|c|c|c|c|c|c|c|c|c|c| } 
											\multicolumn{1}{r}{$\star_{Int}$}
											&  \multicolumn{1}{c}{0}
											&  \multicolumn{1}{c}{1}
											&  \multicolumn{1}{c}{2}
											&  \multicolumn{1}{c}{3}
											&  \multicolumn{1}{c}{4} 
											&  \multicolumn{1}{c}{5}
											&  \multicolumn{1}{c}{6}
											&  \multicolumn{1}{c}{7}
											&  \multicolumn{1}{c}{8}
											&  \multicolumn{1}{c}{9} \\
											\cline{2-11}
											0 & 0 & - & - & - & - & - & - & - & - & - \\
											\cline{2-11}
											1 & 1 & 1 & 1 & 1 & 1 & 1 & 1 & 1 & 1 & 1 \\
											\cline{2-11}
											2 & 2 & 1 & 2 & 1 & 2 & 2 & 2 & 2 & 2 & 2 \\
											\cline{2-11}
											3 & 3 & 1 & 1 & 3 & 3 & 3 & 3 & 3 & 3 & 3 \\
											\cline{2-11}
											4 & 4 & 1 & 2 & 3 & 4 & 4 & 4 & 4 & 8 & 8 \\
											\cline{2-11}
											5 & 4,5 & 1 & 2 & 3 & 4 & 5 & 4 & 5 & 8 & 9 \\
											\cline{2-11}
											6 & 4,6 & 1 & 2 & 3 & 4,6 & 4,6 & 4,6 & 4,6 & 8 & 8 \\
											\cline{2-11}
											7 & 4,5,6,7 & 1 & 2 & 3 & 4,6 & 5,7 & 4,6 & 5,7 & 8 & 9 \\
											\cline{2-11}
						\end{tabular}
				\end{table}
				\begin{table}
					\label{astint}
					\caption{Multiplication operation for edges in \textsc{introduce} bag.}
					\centering
					\begin{tabular}{ r|c|c|c|c|}
											\multicolumn{1}{r}{$\ast_{Int}$}
											&  \multicolumn{1}{c}{0}
											&  \multicolumn{1}{c}{1}
											&  \multicolumn{1}{c}{2}
											&  \multicolumn{1}{c}{3}
											\\
											\cline{2-5}
											0 & 0 & - & - & -   \\
											\cline{2-5}
											1 & 1 & 1 & 1 & 1  \\
											\cline{2-5}
											2 & 2 & 1 & 2 & 2  \\
											\cline{2-5}
											3 & 2,3 & 1 & 2 & 3 \\
											\cline{2-5}
					\end{tabular}
				\end{table}

		The Algorithm~\ref{Introducealgorithm} takes a row from $Stable_{i-1}$ and a row from $Btable_{i}$, and uses
		Equation~\ref{starint} to fill elements $r_0, ..., r_{\tw}$ of $Stable_{i}$. 
		Given that some of the cells in Table~\ref{starint}  have two values, the exact amount is determined according 
		to the value of an adjacent edge.  Algorithm~\ref{Introducealgorithm} at first determines elements $r_0, ...,
		r_{\tw}$ of $Stable_{i}$ if the cell in Table~\ref{Joinstar}  has one value.
		Similarly, it uses Equation~\ref{astint} to fill elements $r_{\tw+1}, ..., r_{\tw+\binom{\tw+1}{2}}$ of
		$Stable_{i}$. Then it assigns $r_0, ..., r_{\tw}$ which the cell in Table~\ref{starint} has two values, 
		it chooses one of them according to the value of the adjacent edges of a vertex in bag $X_i$. 
		The value of the $r_{\tw+\binom{\tw+1}{2}+1}$ is computed during the assignment of the elements $r_{0}, ...,
		r_{\tw+\binom{\tw+1}{2}}$.
		Computing  $Stable_{2} $ as $ Stable_{1} \otimes Btable_{2}$ is shown in Table~\ref{intoducestable1}.
		Algorithm~\ref{Introducealgorithm}  computes $r_{Stable_{1}}(4,:) \otimes r_{Btable_{2}}(1,:)$ as follows:
		\begin{itemize}
			\item[1) ] $(\underline{2},0,0,0,0,0,1)  \star_{Int} (\underline{3},3,0,1,0,0,1) \rightarrow (\underline{1},0,0,0,0,0,0)$
			\item[2) ] $(2,\underline{0},0,0,0,0,1)  \star_{Int} (3,\underline{3},0,1,0,0,1) \rightarrow (1,\underline{3},0,0,0,0,0)$
			\item[3) ] $(2,0,0,\underline{0},0,0,1)  \ast_{Int} (3,3,0,\underline{1},0,0,1) \rightarrow (1,3,0,\underline{1},0,0,0)$
			\item[4) ] $(2,0,0,0,0,0,\underline{1}) \times   (3,3,0,1,0,0,\underline{1}) \rightarrow (1,3,0,1,0,0,\underline{2})$
		\end{itemize} 
		Also it computes $r_{Stable_{1}}(8,:) \otimes r_{Btable_{2}}(2,:)$ as follows:
		\begin{itemize}
			\item[1) ] $(\underline{5},0,0,0,0,0,0)  \star_{Int} (\underline{7},7,0,3,0,0,0) \rightarrow (\underline{?},0,0,0,0,0,0)$
			\item[2) ] $(5,\underline{0},0,0,0,0,0)  \star_{Int} (7,\underline{7},0,3,0,0,0) \rightarrow (?,\underline{?},0,0,0,0,0)$
			\item[3) ] $(5,0,0,\underline{0},0,0,0)  \ast_{Int} (7,7,0,\underline{3},0,0,0) \rightarrow (?,?,0,\underline{3},0,0,0)$
			\item[4) ] $(\underline{5},0,0,0,0,0,0)  \star_{Int} (\underline{7},7,0,3,0,0,0) \rightarrow (\underline{7},0,0,3,0,0,0)$
			\item[5) ] $(5,\underline{0},0,0,0,0,0)  \star_{Int} (7,\underline{7},0,3,0,0,0) \rightarrow (7,\underline{7},0,3,0,0,0)$
			\item[6) ] $(5,0,0,0,0,0,\underline{0})  \times (7,7,0,3,0,0,\underline{0}) \rightarrow (7,7,0,,0,0,\underline{0})$
		\end{itemize} 
		\begin{table}[pbht]
			\label{leaftable1}
			\caption{$Stable_{1} $}
			\centering
			\begin{tabular}{|l|lll|lll|r|}
				\multicolumn{1}{c}{}&	\multicolumn{3}{c}{vertices} & \multicolumn{3}{c}{edges}& \multicolumn{1}{c}{}\\
				\cline{2-7}
				\multicolumn{1}{l|}{case}    & 2 & - & - & - & - & - & \multicolumn{1}{c}{cost} \\
				\cline{1-8}
				1 & 2  & 0 & 0 & 0 & 0 & 0 & 1   \\
				2 & 5 & 0 & 0 & 0 & 0 & 0 & 0   \\
				\cline{1-8}
			\end{tabular}
		\end{table}
		\begin{table}[pbht]
			\label{intoducebtable1}
			\caption{$Btable_{2}$}
			\centering
			\begin{tabular}{|l|lll|lll|r|}
									\multicolumn{1}{c}{}&	\multicolumn{3}{c}{vertices} & \multicolumn{3}{c}{edges}& \multicolumn{1}{c}{}\\
									\cline{2-7}
									\multicolumn{1}{l|}{case}    & 2 & 3 & - & 4 & - & - & \multicolumn{1}{c}{cost} \\
									\cline{1-8}
									1 & 1 & 1 & 0 & 1 & 0 & 0  & 3 \\
									2 & 1 & 3 & 0 & 1 & 0 & 0  & 2 \\
									3 & 3 & 1 & 0 & 1 & 0 & 0  & 2 \\
									4 & 3 & 3 & 0 & 1 & 0 & 0   & 1 \\
									5 & 2 & 2 & 0 & 2 & 0 & 0   & 2 \\
									6 & 2 & 4 & 0 & 2 & 0 & 0  & 1\\
									7 & 4 & 2 & 0 & 2 & 0 & 0  & 1\\
									8 & 7 & 7 & 0 & 3 & 0 & 0  & 0 \\
									\cline{1-8}
								\end{tabular}
		\end{table}
		\begin{table}[pbht]
			\centering
			\label{intoducestable1}	
			\caption{$Stable_{2} $ }
			\begin{tabular}{|l|lll|lll|r|}
									\multicolumn{1}{c}{}&	\multicolumn{3}{c}{vertices} & \multicolumn{3}{c}{edges}& \multicolumn{1}{c}{}\\
									\cline{2-7}
									\multicolumn{1}{c|}{case}    & 2 & 3 & - & 4 & - & - & \multicolumn{1}{c}{cost} \\
									\cline{1-8}
									1 & 1 & 1 & 0 & 1 & 0 & 0  & 3 \\
									2 & 1 & 3 & 0 & 1 & 0 & 0  & 2 \\
									3 & 3 & 1 & 0 & 1 & 0 & 0  & 2 \\
									4 & 3 & 3 & 0 & 1 & 0 & 0   & 1 \\
									5 & 2 & 2 & 0 & 2 & 0 & 0   & 2 \\
									6 & 2 & 4 & 0 & 2 & 0 & 0  & 1\\
									7 & 4 & 2 & 0 & 2 & 0 & 0  & 1\\
									8 & 7 & 7 & 0 & 3 & 0 & 0  & 0 \\
									
									\cline{1-8}
								\end{tabular}	
		\end{table}
		\begin{algorithm}	
			\caption{ Status table's row construction algorithm for an \textsc{introduce} bag.}
			\begin{algorithmic}
				\\ INPUT: $r_{Stable_{i-1}}(\ell_1,:)$ and  $r_{Btable_{i}}(\ell_2,:)$
				\\ OUTPUT: $r_{Stable_{i}}(\ell,:)$
				\State $c\gets0$
				\For{$ j \gets 0  $ to $ \tw$}
				\If{$(r_{Stable_{i-1}}(\ell_1,j)  \leq 2) \textbf{ and } (r_{Btable_{i}}(\ell_2,j) \leq 2)$} 
				\State $c\gets c+1$
				\EndIf
				\If{$|r_{Stable_{i-1}}(\ell_1,j)  \star_{Int} r_{Btable_{i}}(\ell_2,j)|=1$} 
				\State $r_{HS}(1,j) \gets r_{Stable_{i-1}}(\ell_1,j)  \star_{int} r_{Btable_{i}}(\ell_2,j)$
				\EndIf  
				\EndFor
				\For{$ j \gets \tw+1$ to $\tw+\binom{\tw+1}{2}$}
				\If{$(r_{Stable_{i-1}}(\ell_1,j) =  1) \textbf{ and } (r_{Btable_{i}}(\ell_2,j) = 1)$} 
				\State $c\gets c+1$
				\EndIf
				\If{$|r_{Stable_{i-1}}(\ell_1,j)  \ast_{Int} r_{Btable_{i}}(\ell_2,j)|=1$}
				\State  $r_{HS}(1,j) \gets r_{Stable_{i-1}}(\ell_1,j)  \ast_{Int} r_{Btable_{i}}(\ell_2,j)$
				\Else 
				\If{ for $e_{j}=v_{r}v_{new}, Stable_{i-1}(\ell_1,r) \leq 3 $}
				\State $r_{HS}(1,j) \gets 2$
				\Else \State $r_{HS}(1,j) \gets 3$
				\EndIf 
				\EndIf
				\EndFor
				\For{$ j \gets 0$ to $\tw$}
				\If{$|r_{Stable_{i-1}}(\ell_1,j)  \star_{Int} r_{Btable_{i}}(\ell_2,j)|=2$}
				\If{$r_{Stable_{i-1}}(\ell_1,j) = 0 \textbf{ and } r_{Btable_{i}}(\ell_2,j)=5 $} 
				\If{$\exists_{w_{r} \in N^{md}_{Bag}(v_{j})}, r_{HS}(1,r) \leq 2 $}\State $r_{HS}(1,j) \gets 4$
				\Else \State
				$r_{HS}(1,j) \gets 5$
				\EndIf 
				\Else 
				\If{$\forall_{w_{r} \in N^{md}_{Bag}(v_{j}), e_{s}=(w_{r}v_{j})},  r_{Stable_{i}}(s) \leq 2$}\State $r_{HS}(1,j) \gets \text{first element of }r_{Stable_{i-1}}(\ell_1,j)  \ast_{Int} r_{Btable_{i}}(\ell_2,j)$
				\Else \State 
				$r_{HS}(1,j) \gets\text{second element of } r_{Stable_{i-1}}(\ell_1,j)  \ast_{Int} r_{Btable_{i}}(\ell_2,j)$
				\EndIf 
				\EndIf 
				\Else 
				
				\If{$ \forall_{w_{r} \in N^{md}_{Bag}(v_{j}), e_{s}=(w_{r}v_{j})} ,r_{HS}(1,s) \leq 2  \text{\bf{ and }} \exists_{w_{r} \in N^{md}_{Bag}(v_{j})} ,r_{HS}(1,r) \leq 2$} 
				\State $r_{HS}(1,j) \gets 4$					
				\ElsIf{$ \forall_{w_{r} \in N^{md}_{Bag}(v_{j}), e_{s}=(w_{r}v_{j})} ,r_{HS}(1,s) \leq 2  \text{\bf{ and }} \forall_{w_{r} \in N^{md}_{Bag}(v_{j})} ,r_{HS}(1,r) \geq 3$}
				\State $r_{HS}(1,j) \gets 5$					
				\algstore{myalg}
			\end{algorithmic}\label{Introducealgorithm} 
		\end{algorithm}
		\begin{algorithm}                     
			\begin{algorithmic} [1]      
				\algrestore{myalg}
				
				\ElsIf{$ \exists_{w_{r} \in N^{md}_{Bag}(v_{j}), e_{s}=(w_{r}v_{j})} ,r_{HS}(1,s) = 3  \text{\bf{ and }} \exists_{w_{r} \in N^{md}_{Bag}(v_{j})} ,r_{HS}(1,r) \leq 2$} 
				\State $r_{HS}(1,j) \gets 6$					
				\ElsIf{$ \exists_{w_{r} \in N^{md}_{Bag}(v_{j}), e_{s}=(w_{r}v_{j})} ,r_{HS}(1,s) = 3  \text{\bf{ and }} \forall_{w_{r} \in N^{md}_{Bag}(v_{j})} ,r_{HS}(1,r) \geq 3$} 
				\State $r_{HS}(1,j) \gets 7$					
				\EndIf 
				\EndIf 
				\EndFor
				\State Calculate the value of $\ell$
				\State $r_{Stable_{i}}(\ell,1:end-1) \gets r_{HS}(1,1:end-1)$
				\If{$r_{Stable_{i}}(\ell,end) > (r_{Stable_{i-1}}(\ell_1,end) + r_{Btable_{i}}(\ell_2,end) - c)$}
				\State $r_{Stable_{i}}(\ell,end) \gets r_{Stable_{i-1}}(\ell_1,end) + r_{Btable_{i}}(\ell_2,end) - c$
				\EndIf
			\end{algorithmic}
		\end{algorithm}
		
		\subsection{Construction of the status table for a \textsc{forget} bag}
		A \textsc{forget} bag $X_i$ loses one vertex and its incident edges with respect to its present bag
		$X_{i-1}$. So, it is enough to omit the invalid rows from $ Stable_{i-1}$ to obtain the  $ Stable_{i}$ for
		bag  $X_{i}$. Algorithm~\ref{forgetal} describes an approach for constructing rows of $ Stable_{i}$. This
		algorithm  takes a row from $Stable_{i-1}$. Let vertex $v_{eliminated}$ is the vertex that will be deleted in
		bag $X_i$. If this vertex has values 5, 7, 8 and 9, the algorithm omits this row, but if it has value 6, the
		algorithm updates the value of $N(v)$ in bag $X_{i-1}$. When its neighbor is covered but the edge between
		them is not covered, the neighbor's value changes to 8, and  when none of its neighbors and 
		edges between them are not covered, the neighbor's value changes to 9. 
		For example Algorithm~\ref{forgetal} takes rows of $ Stable_{11}$ and constructs $ Stable_{12}$.
		
		\begin{table}[pbht]
					\centering
			\label{jointable}	
			\caption{$Stable_{11}$}
			\scalebox{0.7}{  
										\begin{tabular}{|l|lll|lll|r|}
											\multicolumn{1}{c}{}&	\multicolumn{3}{c}{vertices} & \multicolumn{3}{c}{edges}& \multicolumn{1}{c}{}\\
											\cline{2-7}
											\multicolumn{1}{c|}{case}    & 1 & 4 & - & 3 & - & - & \multicolumn{1}{c}{cost} \\
											\cline{1-8}
											1 & 1 & 1 & 0 & 1 & 0 & 0  & 4 \\
											2 & 1 & 2 & 0 & 2 & 0 & 0  & 3\\
											3 & 1 & 3 & 0 & 1 & 0 & 0  & 4\\
											4 & 1 & 3 & 0 & 2 & 0 & 0  & 3 \\
											5 & 1 & 4 & 0 & 2 & 0 & 0  & 3\\
											6 & 1 & 8 & 0 & 2 & 0 & 0  & 3\\
											7 & 2 & 1 & 0 & 2 & 0 & 0  & 3\\
											8 & 2 & 2 & 0 & 2 & 0 & 0  & 3\\
											9 & 2 & 3 & 0 & 2 & 0 & 0  & 3\\
											10 & 2 & 4 & 0 & 2 & 0 & 0  & 3\\
											11 & 2 & 8 & 0 & 2 & 0 & 0  & 3\\
											12 & 3 & 1 & 0 & 1 & 0 & 0  & 3\\
											13 & 3 & 1 & 0 & 2 & 0 & 0  & 3\\
											14 & 3 & 2 & 0 & 2 & 0 & 0  & 2\\
											15 & 3 & 3 & 0 & 1 & 0 & 0  & 3\\
											16 & 3 & 3 & 0 & 2 & 0 & 0  & 3\\
											17 & 3 & 4 & 0 & 2 & 0 & 0  & 3\\
											18  & 3 & 8 & 0 & 2 & 0 & 0  & 3\\
											19 & 4 & 1 & 0 & 2 & 0 & 0  & 3\\
											20 & 4 & 2 & 0 & 2 & 0 & 0  & 2\\
											21 & 4 & 3 & 0 & 2 & 0 & 0  & 3\\
											22 & 5 & 3 & 0 & 2 & 0 & 0  & 2\\ 
											23 & 6 & 6 & 0 & 3 & 0 & 0  & 3 \\
											24 & 7 & 6 & 0 & 3 & 0 & 0  & 2\\
											25 & 7 & 7 & 0 & 3 & 0 & 0  & 1\\
											26 & 8 & 1 & 0 & 2 & 0 & 0  & 3\\
											27 & 8 & 2 & 0 & 2 & 0 & 0  & 2\\
											28 & 8 & 3 & 0 & 2 & 0 & 0  & 2\\
											29 & 8 & 6 & 0 & 3 & 0 & 0  & 2\\
											30 & 8 & 8 & 0 & 2 & 0 & 0  & 2\\
											\cline{1-8}
										\end{tabular} 
									}	
		\end{table}
		\begin{table}[pbht]
					\centering
				\label{roottable}
			\caption{$Stable_{12} $}
			\scalebox{.7}{  
										\begin{tabular}{|l|lll|lll|r|}
											\multicolumn{1}{c}{}&	\multicolumn{3}{c}{vertices} & \multicolumn{3}{c}{edges}& \multicolumn{1}{c}{}\\
											\cline{2-7}
											\multicolumn{1}{c|}{case}    & 1 & - & - & - & - & - & \multicolumn{1}{c}{cost} \\
											\cline{1-8}
											1 & 1  & 0 & 0 & 0 & 0 & 0 & 3  \\
											2 & 2 & 0 & 0 & 0 & 0 & 0 & 3   \\
											1 & 3  & 0 & 0 & 0 & 0 & 0 & 2  \\
											2 & 4 & 0 & 0 & 0 & 0 & 0 & 2  \\
											1 & 5  & 0 & 0 & 0 & 0 & 0 & 2  \\
											2 & 8 & 0 & 0 & 0 & 0 & 0 & 2  \\
											\cline{1-8}
										\end{tabular}
									}
		\end{table}
		
		\begin{algorithm}	[h!]
					\caption{ Status table's row construction algorithm for a \textsc{forget} bag. }
					\begin{algorithmic}
						\\ INPUT: $r_{Stable_{i-1}}(\ell,:)$ 
						\\ OUTPUT: $r_{Stable_{i}}(\ell,:)$ or 0
						\If{$(r_{Stable_{i-1}}(\ell,v_{eliminated})  \geq7) $ \bf{or} $  (r_{Stable_{i-1}}(\ell,v_{eliminated})=  5)$}
						\State  Return 0  
						\Else
						\State $r_{HS}(1,:) \gets r_{Stable_{i-1}}(\ell,:)$
						\If {$r_{Stable_{i-1}}(\ell,v_{eliminated})  = 6$}
						\If {$( \forall_{v_{r} \in N^{md}_{X_{i-1}}(v_{eliminated}),e_{s}=(v_{r},v_{eliminated})},r_{Stable_{i-1}}(\ell,r)=6) \text{\bf{ and }} (r_{Stable_{i-1}}(\ell,s) = 3)$}
						\State $r_{HS}(\ell,r)\gets8$
						\ElsIf{$(\forall_{v_{r} \in N^{md}_{Bag_{i-1}}(v_{eliminated}),e_{s}=(v_{r},v_{eliminated})},r_{Stable_{i-1}}(\ell,r)=7)\text{\bf{ and }}(r_{Stable_{i-1}}(\ell,s)=3)$}
						\State $r_{HS}(\ell,r)\gets9$
						\EndIf
						\EndIf
						
						\EndIf 
						\State Calculate the value of $\ell$
						\State $r_{Stable_{i}}(\ell,:) \gets r_{HS}(1,:)$
						\State $r_{HS}(\ell,v_{eliminated}) \gets 0$
						\State $ \forall_{v_{r} \in N^{md}_{X_{i-1}}(v_{eliminated}), e_{s}=(v_{r},v_{eliminated})}, r_{HS}(\ell,s) \gets 0 $
						
					\end{algorithmic} \label{forgetal}
				\end{algorithm} 
				\subsection{Construction of the status table for a \textsc{join} bag}
				
				A \textsc{join} bag $X_i$ has the same set of vertices and edges with  its two children ${X_{i1}}$ and ${X_{i2}}$.  To construct possible states for $X_i$ in  $ Stable_{i}$, we compute $Stable_{i} = Stable_{i1} \otimes Stable_{i2}$. Let 
						$r_{Stable_{i1}}(\ell_{1},:)$  and $r_{Stable_{i2}}(\ell_{2},:)$ be two rows of $ Stable_{i1}$ and
						$Stable_{i2}$, and $j$ refers to the entries of a row of a table.  Equation~\ref{seq1} describes how to construct the rows of  $ Stable_{i}$.
						\begin{equation} \label{seq2}
								r_{Stable_{i}}(\ell,j) =\begin{cases}
								r_{Stable_{i1}}(\ell_1,j)  \star_{Join} r_{Stable_{i2}}(\ell_2,j), & \text{if $0 \leq j \leq \tw$ }.\\
								r_{Stable_{i1}}(\ell_1,j)  \ast_{Join} r_{Stable_{i2}}(\ell_2,j), & \text{if $\tw+1 \leq i \leq \tw+\binom{\tw+1}{2}$ }.\\
								r_{Stable_{i1}}(\ell_1,j)  $ + $  r_{Stable_{i2}}(\ell_2,j)  $ - $
								|A| $ - $ |B|, &\text{if $ i = \tw+\binom{\tw+1}{2}+1$ }. \\ 
								\end{cases}
								\end{equation}
								where
								\begin{equation}
								A= \left\{\alpha\mid (r_{Stable_{i1}}(\ell_1,\alpha) \leq   2   \wedge  \hspace{3pt}
								r_{Stable_{i2}}(\ell_2,\alpha)\leq  2)   \wedge  (0 \leq \alpha \leq \tw)  \right\}.
								\end{equation}
								\begin{equation}
								B =  \left\{\alpha \mid \left(r_{Stable_{i1}}(\ell_1,\alpha) =   1   \wedge  \hspace{3pt}
								r_{Stable_{i2}}(\ell_2,\alpha) =  1)   \wedge  (\tw+1 \leq \alpha \leq \tw +1+ {\textstyle \binom{\tw+1}{2}} \right) \right\}.
								\end{equation}
				In Equation~\ref{seq2},  two different multiplication operations  $\star_{Join}$ and $\ast_{Join}$ are
						used to obtain the entries of $ Stable_{i}$. Their multiplication tables  are given in
						Tables~\ref{Joinstar} and~\ref{Joinast}. Algorithm~\ref{joinalgorithm} describes how to construct the rows of  $ Stable_{i}$ precisely.
						Given that some of the cells in Table~\ref{Joinstar}  have two values we do as before.
						Algorithm~\ref{joinalgorithm} at first determines elements $r_0, ..., r_{\tw}$ of $Stable_{i}$ if the cell in
						Table~\ref{Joinstar}  has one value then it uses Table~\ref{Joinast} to fill elements $r_{\tw+1}, ...,
						r_{\tw+\binom{\tw+1}{2}}$ of $Stable_{i}$. Finally, it assigns $r_0, ..., r_{\tw}$ when the cell in 
						Table~\ref{Joinstar} has two values, it chooses one as mentioned before.  The value of $r_{\tw+\binom{\tw+1}{2}+1}$
						that is computed during the assignment of the elements is $r_{0}, ..., r_{\tw+\binom{\tw+1}{2}}$.
				\begin{table}[htpb]
				\centering
					\label{Joinstar}
					\caption{Multiplication operation for vertices in \textsc{join} bag.}
					\scalebox{0.7}{  
												\begin{tabular}{ r|c|c|c|c|c|c|c|c|c|c| }
													\multicolumn{1}{r}{$\star_{Join}$}
													&  \multicolumn{1}{c}{0}
													&  \multicolumn{1}{c}{1}
													&  \multicolumn{1}{c}{2}
													&  \multicolumn{1}{c}{3}
													&  \multicolumn{1}{c}{4} 
													&  \multicolumn{1}{c}{5}
													&  \multicolumn{1}{c}{6}
													&  \multicolumn{1}{c}{7}
													&  \multicolumn{1}{c}{8}
													&  \multicolumn{1}{c}{9} \\
													\cline{2-11}
													0 & 0 & - & - & - & - & - & - & - & - & - \\
													\cline{2-11}
													1 & - & 1 & 1 & 1 & 1 & 1 & 1 & 1 & 1 & 1 \\
													\cline{2-11}
													2 & - & 1 & 2 & 1 & 2 & 2 & 2 & 2 & 2 & 2 \\
													\cline{2-11}
													3 & - & 1 & 1 & 3 & 3 & 3 & 3 & 3 & 3 & 3 \\
													\cline{2-11}
													4 & - & 1 & 2 & 3 & 4 & 4 & 4 & 4 & 8 & 8 \\
													\cline{2-11}
													5 & - & 1 & 2 & 3 & 4 & 5 & 4 & 5 & 8 & 9 \\
													\cline{2-11}
													6 & - & 1 & 2 & 3 & 4 & 4 & \begin{tabular}{l|*{2}{c}}
														\hline
														First & \multicolumn{2}{|c}{Second} \\ \hline
														4 & 6  
													\end{tabular}
													& \begin{tabular}{l|*{2}{c}}
														\hline
														First & \multicolumn{2}{|c}{Second} \\ \hline
														4 & 6  
													\end{tabular} & 8 & 8 \\
													\cline{2-11}
													7 & - & 1 & 2 & 3 & 4 & 5 & 
													\begin{tabular}{l|*{2}{c}}
														\hline
														First & \multicolumn{2}{|c}{Second} \\ \hline
														4 & 6  
													\end{tabular} & \begin{tabular}{l|*{2}{c}}
													\hline
													First & \multicolumn{2}{|c}{Second} \\ \hline
													5 & 7  
												\end{tabular} & 8 & 9 \\
												\cline{2-11}
												8 & - & 1 & 2 & 3 & 8 & 8 & 8 & 8 & 8 & 8 \\
												\cline{2-11}
												9 & - & 1 & 2 & 3 & 8 & 9 & 8 & 9 & 8 & 9 \\
												\cline{2-11}
											\end{tabular}
											
										}	
				\end{table}
				
				\begin{table}[htpb]
				\centering
				\label{Joinast}
					\caption{Multiplication operation for edges in \textsc{join} bag.}
					\scalebox{0.7}{  
											\begin{tabular}{ r|c|c|c|c| c| }
												\multicolumn{1}{r}{$\ast_{Join}$}
												&  \multicolumn{1}{c}{0}
												&  \multicolumn{1}{c}{1}
												&  \multicolumn{1}{c}{2}
												&  \multicolumn{1}{c}{3}
												\\
												\cline{2-5}
												0 & 0 & - & - & -   \\
												\cline{2-5}
												1 & - & 1 & 1 & 1  \\
												\cline{2-5}
												2 & - & 1 & 2 & 2  \\
												\cline{2-5}
												3 & - & 1 & 2 & 3 \\
												\cline{2-5}
											\end{tabular}
											
										}
				\end{table}
				
		Algorithm~\ref{joinalgorithm} computes  $ Stable_{1} \otimes Btable_{2}$ to construct $Stable_{2} $. The status
			table of bag 12 of Figure~\ref{NTD} is shown in Table~\ref{jointable3}.
		\begin{table}
		\label{jointable1}	
		\centering
			\caption{$Stable_{6}$}
			\scalebox{0.7}{  
									\begin{tabular}{|l|lll|lll|r|}
										\multicolumn{1}{c}{}&	\multicolumn{3}{c}{vertices} & \multicolumn{3}{c}{edges}& \multicolumn{1}{c}{}\\
										\cline{2-7}
										\multicolumn{1}{c|}{case}    & 1 & 4 & - & 3 & - & - & \multicolumn{1}{c}{cost} \\
										\cline{1-8}
										1 & 1 & 1 & 0 & 1 & 0 & 0  & 4 \\
										2 & 1 & 2 & 0 & 2 & 0 & 0  & 3\\
										3 & 1 & 3 & 0 & 1 & 0 & 0  & 3\\
										4 & 1 & 3 & 0 & 2 & 0 & 0  & 2 \\
										5 & 1 & 4 & 0 & 2 & 0 & 0  & 2\\
										6 & 1 & 8 & 0 & 2 & 0 & 0  & 2\\
										7 & 2 & 2 & 0 & 2 & 0 & 0  & 3\\
										8 & 2 & 4 & 0 & 2 & 0 & 0  & 2\\
										9 & 2 & 8 & 0 & 2 & 0 & 0  & 2\\
										10 & 3 & 1 & 0 & 1 & 0 & 0  & 3\\
										11 & 3 & 1 & 0 & 2 & 0 & 0  & 3\\
										12 & 3 & 2 & 0 & 2 & 0 & 0  & 2\\
										13 & 3 & 3 & 0 & 1 & 0 & 0  & 2\\
										14 & 3 & 3 & 0 & 2 & 0 & 0  & 2\\
										15 & 3 & 4 & 0 & 2 & 0 & 0  & 2\\
										16 & 3 & 5 & 0 & 2 & 0 & 0  & 2\\
										17  & 3 & 9 & 0 & 2 & 0 & 0  & 2\\
										18 & 4 & 1 & 0 & 2 & 0 & 0  & 2\\
										19 & 4 & 2 & 0 & 2 & 0 & 0  & 2\\
										20 & 4 & 3 & 0 & 2 & 0 & 0  & 2\\
										21 & 5 & 3 & 0 & 2 & 0 & 0  & 2\\ 
										22 & 6 & 6 & 0 & 3 & 0 & 0  & 2 \\
										23 & 6 & 7 & 0 & 3 & 0 & 0  & 2\\
										24 & 7 & 7 & 0 & 3 & 0 & 0  & 1\\
										25 & 8 & 1 & 0 & 2 & 0 & 0  & 3\\
										26 & 8 & 2 & 0 & 2 & 0 & 0  & 2\\
										27 & 8 & 3 & 0 & 2 & 0 & 0  & 2\\
										28 & 8 & 6 & 0 & 3 & 0 & 0  & 1\\
										29 & 8 & 9 & 0 & 3 & 0 & 0  & 1\\
										
										\cline{1-8}
									\end{tabular} 
								} 
		\end{table}
		
		\begin{table}[pbht]
		\centering
		 \label{jointable2}	
		 \caption{$Stable_{10}$}
			\scalebox{0.7}{  
									\begin{tabular}{|l|lll|lll|r|}
										\multicolumn{1}{c}{}&	\multicolumn{3}{c}{vertices} & \multicolumn{3}{c}{edges}& \multicolumn{1}{c}{}\\
										\cline{2-7}
										\multicolumn{1}{c|}{case}    & 1 & 4 & - & 3 & - & - & \multicolumn{1}{c}{cost} \\
										\cline{1-8}
										1 & 1 & 1 & 0 & 1 & 0 & 0  & 3 \\
										2 & 1 & 3 & 0 & 1 & 0 & 0  & 3\\
										3 & 2 & 2 & 0 & 2 & 0 & 0  & 2\\
										4 & 2 & 3 & 0 & 2 & 0 & 0  & 2\\
										5 & 2 & 4 & 0 & 2 & 0 & 0  & 2\\
										6 & 2 & 3 & 0 & 2 & 0 & 0  & 2\\
										
										7 & 3 & 1 & 0 & 1 & 0 & 0  & 2\\
										8 & 3 & 3 & 0 & 1 & 0 & 0  & 2\\ 
										9 & 4 & 1 & 0 & 2 & 0 & 0  & 2\\
										10 & 4 & 2 & 0 & 2 & 0 & 0  & 1\\
										11 & 5 & 3 & 0 & 2 & 0 & 0  & 1\\ 
										12 & 7 & 6 & 0 & 3 & 0 & 0  & 1\\
										\cline{1-8}
									\end{tabular}
								} 
		\end{table}
		
		\begin{table}[hbpt]
			 \label{jointable3}	
			 \centering
			\caption{$Stable_{11}$}
			\scalebox{0.7}{  
									\begin{tabular}{|l|lll|lll|r|}
										\multicolumn{1}{c}{}&	\multicolumn{3}{c}{vertices} & \multicolumn{3}{c}{edges}& \multicolumn{1}{c}{}\\
										\cline{2-7}
										\multicolumn{1}{c|}{case}    & 1 & 4 & - & 3 & - & - & \multicolumn{1}{c}{cost} \\
										\cline{1-8}
										1 & 1 & 1 & 0 & 1 & 0 & 0  & 4 \\
										2 & 1 & 2 & 0 & 2 & 0 & 0  & 3\\
										3 & 1 & 3 & 0 & 1 & 0 & 0  & 4\\
										4 & 1 & 3 & 0 & 2 & 0 & 0  & 3 \\
										5 & 1 & 4 & 0 & 2 & 0 & 0  & 3\\
										6 & 1 & 8 & 0 & 2 & 0 & 0  & 3\\
										7 & 2 & 1 & 0 & 2 & 0 & 0  & 3\\
										8 & 2 & 2 & 0 & 2 & 0 & 0  & 3\\
										9 & 2 & 3 & 0 & 2 & 0 & 0  & 3\\
										10 & 2 & 4 & 0 & 2 & 0 & 0  & 3\\
										11 & 2 & 8 & 0 & 2 & 0 & 0  & 3\\
										12 & 3 & 1 & 0 & 1 & 0 & 0  & 3\\
										13 & 3 & 1 & 0 & 2 & 0 & 0  & 3\\
										14 & 3 & 2 & 0 & 2 & 0 & 0  & 2\\
										15 & 3 & 3 & 0 & 1 & 0 & 0  & 3\\
										16 & 3 & 3 & 0 & 2 & 0 & 0  & 3\\
										17 & 3 & 4 & 0 & 2 & 0 & 0  & 3\\
										18  & 3 & 8 & 0 & 2 & 0 & 0  & 3\\
										19 & 4 & 1 & 0 & 2 & 0 & 0  & 3\\
										20 & 4 & 2 & 0 & 2 & 0 & 0  & 2\\
										21 & 4 & 3 & 0 & 2 & 0 & 0  & 3\\
										22 & 5 & 3 & 0 & 2 & 0 & 0  & 2\\ 
										23 & 6 & 6 & 0 & 3 & 0 & 0  & 3 \\
										24 & 7 & 6 & 0 & 3 & 0 & 0  & 2\\
										25 & 7 & 7 & 0 & 3 & 0 & 0  & 1\\
										26 & 8 & 1 & 0 & 2 & 0 & 0  & 3\\
										27 & 8 & 2 & 0 & 2 & 0 & 0  & 2\\
										28 & 8 & 3 & 0 & 2 & 0 & 0  & 2\\
										29 & 8 & 6 & 0 & 3 & 0 & 0  & 2\\
										30 & 8 & 8 & 0 & 2 & 0 & 0  & 2\\
										\cline{1-8}
									\end{tabular}
								}
		\end{table}
		
		\begin{algorithm}
				\caption{Status table's row construction algorithm for a \textsc{join} bag.}
				\begin{algorithmic}
					\\ INPUT: $r_{Stable_{i1}}(\ell_1,:)$ and  $r_{Stable_{i2}}(\ell_2,:)$
					\\ OUTPUT: $r_{Stable_{i}}(\ell,:)$
					\For{$ j \gets 0 $ to $ \tw$}
					\If{$|r_{Stable_{i1}}(\ell_1,j)  \star_{Join} r_{Stable_{i2}}(\ell_2,j)|=1$}
					\State  $r_{HS}(1,j) \gets r_{Stable_{i1}}(\ell_1,j)  \star_{Join} r_{Stable_{i2}}(\ell_2,j)$
					\EndIf  
					\If{$(r_{Stable_{i1}}(\ell_1,j)  \leq 2) \textbf{ and } (r_{Stable_{i}}(\ell_2,j) \leq 2)$} 
					\State $c\gets c+1$
					\EndIf
					
					\EndFor
					\For{$ j \gets \tw$ to $\tw+\binom{\tw+1}{2}$}
					\State			 $r_{HS}(1,j) \gets r_{Stable_{i1}}(\ell_1,j)  \ast_{Join} r_{Stable_{i2}}(\ell_2,j)$
					\If{$(r_{Stable_{i-1}}(\ell_1,j) =  1) \textbf{ and } (r_{Stable_{i}}(\ell_2,j) = 1)$} 
					\State $c\gets c+1$
					\EndIf
					
					\EndFor
					\For{$ j \gets 0$ to $\tw$}
					\If{$|r_{Stable_{i1}}(\ell_1,j)  \star_{Join} r_{Stable_{i2}}(\ell_2,j)|=2$} 
					
					\If{$ \forall_{w_{r} \in N^{md}_{Bag}(v_{j}), e_{s}=(w_{r}v_{j})}, r_{HS}(1,s) \leq 2$}\State $r_{HS}(\ell,j) \gets \text{First element of }  r_{Stable_{i1}}(\ell_1,j)  \ast_{Join} r_{Stable_{i2}}(\ell_2,j)$
					\Else
					\State$r_{HS}(1,j) \gets \text{Second element of }  r_{Stable_{i1}}(\ell_1,j)  \ast_{Join} r_{Stable_{i2}}(\ell_2,j)$
					
					\EndIf 
					\EndIf 
					
					\EndFor
					\State Calculate the value of $\ell$
					\State	$r_{Stable_{i}}(\ell,1:end-1) \gets r_{HS}(1,1:end-1)$
					\If{$r_{Stable_{i}}(\ell,end) > (r_{Stable_{i-1}}(end) + r_{Btable_{i}}(end) - c)$}
					\State $r_{Stable_{i}}(end) \gets r_{Stable_{i-1}}(end) + r_{Btable_{i}}(end) - c$
					\EndIf
					
				\end{algorithmic} \label{joinalgorithm}
			\end{algorithm}
			
			Note that Algorithm~\ref{MixedAlgorithm} computes the value of  $\gamma_{md}$, however it is not a minimum
				mixed dominating set. It is possible to modify the algorithm to obtain a mixed dominating set with minimum
				size. This modification is as follows.
				We first consider some fixed arbitrary one-to-one total numbering function $\phi$  used to code elements
				of $S \subseteq V \cup E$.
				Let $|V|=n$ and $|E|=m$, $  \phi: \{V \cup E\} \xrightarrow{1-1} \{1, 2, ..., n+m\}$. 
				The function $\phi$ determines an arbitrary order on set $\{V \cup E\}$. We get elements in a particular
				order to code a partial solution.
				Indeed, we display a partial solution as a binary number, i.e.~every element in $\{V \cup E\}$ can have two
				values of 0 or 1 where 0 indicates that the corresponding element is not present (in the mixed domination
				set) and 1 indicates it is present.
				We use a function 
				$\psi : \phi(x) \rightarrow \{0, 1\}$.
				to convert a partial solution $x$ to a binary number. Its enough to change the algorithm and save 
				elements of the partial solution by a binary number just after constructing each state. 
				Other changes to the algorithm are straightforward. 
				Note that the maximum number of bits that can be changed in a state is equal to $k + \binom {k}{2}$, where $k$ is the number of vertices in a bag.
				
				After computing $\gamma_{md}$, our proposed algorithm traverses the tree decomposition via $\tau$ recursively 
				and identifies the edges and vertices in the mixed domination sets. Finally, the AMDS problem is solved.
				
				\subsection{The Modified Algorithm to Solve the MDS Problem}
				In this section, we use the notion of fast subset convolution, which was introduced by Van et al.~in~\cite{van2009dynamic}, to solve MDS in time $O^*(6^{\tw})$. 
					Van et al.~introduced two new techniques, the first using a variant of convolutions, and the
					second being a simple way of partitioned table handling, which can be used for MDS as well.
					They used an alternative representation to obtain an exponentially faster algorithm. 
					In their solution for dominating set problem, each vertex can be in three states as follow:
					\begin{itemize}
						\item[1:] Vertex is in the dominating set.
						\item[$0_1$:] Vertex is not in the dominating set and has already been dominated.
						\item[$0_0$:] Vertex is not in the dominating set and has not yet been dominated.
					\end{itemize}
					
					A vertex in their alternative representation defines two basic states $1$ and $0_?$. The state $0_?$ denotes that a vertex is not in the dominating set and may or may not be dominated by the current dominating set. The transformation can be applied by the formula of $F(c)$, which represents their table for coloring $c$. 
					\[F(c_1\times {0_?} \times c_2) = F(c_1\times {0_0} \times c_2) + F(c_1\times {0_1} \times c_2), \] 
					where $c_1$ is a subcoloring of size $i$, and $c_2$ is a subcoloring of size $k - i - 1$. The transformation to basic states is applicable using
					\[F(c_1\times {0_1} \times c_2) = F(c_1\times {0_?} \times c_2) - F(c_1\times {0_0} \times c_2). \]
					Dynamic programming algorithm over nice tree decomposition uses $O^*(3^\tw)$ time on leaf, \textsc{introduce} and \textsc{forget} bags, however, it uses
					$O^*(4^\tw)$ time on a \textsc{join} bag. By using the results of the convolution of Van et al., 
					the time complexity of a \textsc{join} bag is improved to $O^*(3^\tw)$.
					
					Now, we introduce our algorithm to solve MDS. Similar to solving the AMDS, we assign a power value to the vertices and use dynamic programming. Our solution to the MDS has two differences compared with the solution to the AMDS:
					\begin{itemize}
						\item[1:] The tables $Stable$ and $Btable$ store a valid assignment for just vertices and we do not need to consume memory to store information on edges in a beg.
						\item[2:] A vertex $v \in X_{i} $ has six possible states based on satisfying $v \in \mathcal{MD}$ and being covered by
						$N^{md}_G[v]$. These six conditions are illustrated in Table~\ref{bag condition vertex for msd}. 
					\end{itemize}
					
					\begin{table}[pbht]
					\centering
					\label{bag condition vertex for msd}
						\caption{Possible states for each vertex with respect to a  bag in tables $Stable$ and $Btable$.}
						\scalebox{0.7}{  
									\begin{tabular}{ r|c|c|c|c|c|c|c| }
										\multicolumn{1}{r}{}
										&  \multicolumn{1}{c}{$v\in \mathcal{MD}$}
										&  \multicolumn{1}{c}{$N^{e}_{X_{i}}(v)\in \mathcal{MD}$ }
										
										&  \multicolumn{1}{c}{vertex cover}
										& \multicolumn{1}{c}{edge cover} 
										&  \multicolumn{1}{c}{vertex power}
										
										\\
										\cline{2-6}
										1 & $v \in \mathcal{MD}$ & $\forall e \in N^e_{X_{i}}(v)$  & $v$ is covered & $\forall e \in N^e_{X_{i}}(v), e$ is covered & 2
										\\
										\cline{2-6}
										3 & $v \notin \mathcal{MD}$ & $\exists e \in N^e_{X_{i}}(v) , e \in \mathcal{MD}$ & $v$ is covered & $\forall e \in N^e_{X_{i}}(v), e$ is covered & 1
										\\
										\cline{2-6}
										4 & $v \notin \mathcal{MD}$ & $\forall e \in N^e_{X_{i}}(v), e \notin \mathcal{MD}$ & $v$ is covered & $\forall e \in N^e_{X_{i}}(v), e$ is covered & 0 
										\\
										\cline{2-6}
										5 & $v \notin \mathcal{MD}$ & $\forall e \in N^e_{X_{i}}(v), e \notin \mathcal{MD}$ & $v$ is not covered & $\forall e \in N^e_{X_{i}}(v), e$ is covered & 0 
										\\
										\cline{2-6}
										6 & $v \notin \mathcal{MD}$ & $\forall e \in N^e_{X_{i}}(v), e \notin \mathcal{MD}$ & $v$ is covered & $\exists e  \in N^e_{X_{i}}(v), e$ is not covered & 0 
										\\
										\cline{2-6}
										7 & $v \notin \mathcal{MD}$ & $\forall e \in N^e_{X_{i}}(v), e \notin \mathcal{MD}$ & $v$ is not covered & $\exists e  \in N^e_{X_{i}}(v), e$ is not covered & 0 
										\\
										\cline{2-6}
										
									\end{tabular}
									
								}
					\end{table}
					
		These six states are the same as the seven states in
			Section~\ref{Sec3}, except for the state 2.
			The state 2 is deleted since it has the same effect as state 1 on $N^{md}_G[v]$, since we do not store any information on edges in the tables $Stable$ and $Btable$. 
			
	We follow the three phases of our proposed algorithm in Section~\ref{Sec4}. Note the storage tables 
		$Stable$ and $Btable$ store a valid assignment only for the vertices.
		During the traversal $\tau$, when we observe a \textsc{leaf} bag, a new table is created which saves all
		of the possible states for the only vertex in that bag. In an \textsc{introduce} or a \textsc{join} bag,
		the multiplication operators $\star_{Int}$ and $\star_{Join}$ in Equations~\ref{seq1} and~\ref{seq2},
		respectively, are used to
		compute the entries of $Stable_{i}$. This algorithm omits the invalid rows from $Stable_{i-1}$ to obtain
		the $Stable_{i}$ for bag  $X_{i}$ in a \textsc{forget} bag.
		
		It is clear that the time complexity of this algorithm relates to the time spent for processing each bag in
		the traversal of $\tau$. In a bag, we have at most $\tw + 1$ vertices. So, This	algorithm uses
		$O(6^{\tw})$ time on leaf, \textsc{introduce} and \textsc{forget} bags but $O(7^{\tw})$ time on a
		\textsc{join} bag. To reduce the time spent for a \textsc{join} bag, we use the fast subset convolution to
		multiply the two tables of size $6^{\tw}$ in time $O(6^{\tw})$. In this multiplication technique, we use
		the notion of fast subset convolution and convert the two tables $Stable$ and $Btable$ to two new tables
		$Stable'$ and $Btable'$. In these new tables, states $5$ and $7$ are merged to a new state $0_{?}$ where this
		vertex is not covered and its edges may or may not be covered. Therefore, it suffices to multiply the two
		tables since no states in a \textsc{join} bag tables are lost. Using the results of convolution and
		alternative representations of vertex states, our algorithm for MDS improves to $O^*(6^{\tw})$.
		
		\section{The Correctness of the Algorithms} \label{Sec5}
			In this section, we first show that finding the mixed domination number of graphs is checkable in linear-time if the graph has bounded tree-width using Courcelle's Theorem~\cite{courcelle1990monadic}. Then, we describe how to ensure that our proposed bottom-up method solves MDS by computing partial solutions as state tables for bags.  A partial solution is an object which stores all possible states for vertices and edges in a bag. Therefore, what we need to show is showing that how a partial solution can be extended to a final solution. 
				
				To show that the mixed domination property of graphs can be checked in linear-time for graphs with bounded tree-width, we consider the following to express mixed domination in monadic second-order logic. 
				\begin{itemize}
					\item Vertices, edges, sets of vertices and edges of a graph G as variables of monodic second order logic.
					\item Relations $adj(p, q)$ and $inc(p, q) $  which  are defined as follows: $adj(p, q)$ is a binary adjacency relation and it  is true if and only if $p$ and $ q$ are two adjacent vertices or are two adjacent edges of $G$, and $inc(p, q) $ is a binary incidence relation and it is true if and only if edge $p$ is incident to vertex $q$ or vertex  $p$ is incident to edge $q$.
					\item The set operations $\cup$, $\cap$, $\subseteq $ and $\in$  denote the union, the intersection, the
					subset and  the membership operators, respectively.
					\item Unary set cardinality operator $\left\vert{S}\right\vert$ and the set equality operator $=$.
					\item The comparison operator  $\leq$.
					\item The logical connectives AND $(\wedge)$ and  OR $(\vee)$.
					\item The logical quantifiers $\forall$ and $\exists$ over vertices, edges, sets of vertices and edges of G.
				\end{itemize}
				
				By modeling the mixed domination property of graphs in the described monadic second-order logic, we can conclude that checking this property on graphs with a bounded tree-width is a linear time task.
				Let $G=(V,E)$ be a given simple graph. For any element $r\in V \cup E$, the mixed neighborhood of $r$  is
				denoted by $N^{md}(r)$ and is defined as $N^{md}_G(r) = \{ s\in V \cup E \mid adj(s, r) \vee inc(s, r) \}$
				and the closed neighborhood of $r$ is denoted by $N^{md}_G[r]$ and equals $N^{md}_G[r] =N^{md}_G(r)\cup
				\{r\}$. A subset $\mathcal{S} \subseteq V \cup E$ is a mixed dominating set (MDS) for $G$, if for all ${r\in
					V \cup E},$ it is the case that $|N^{md}_G[r] \cap \mathcal{S}|\geq 1 $. The mixed domination number problem
				asks for the size of $S$ which is expressed as in Equation~\ref{mso}:
				
				\begin{equation}\label{mso}
				\exists S  \subseteq (V \cup E),\; S \text{ is an  MDS } \wedge\; \forall M\subseteq (V \cup E),\;  M \text{  is  an MDS},\; |S|  \leq |M|.
				\end{equation}
				
				To show the correctness of our algorithm, it is enough to show that the extension of a partial solution
				satisfies the condition of the restricted form of mixed domination problem. Extending a partial solution
				begins by constructing a table for a \textsc{leaf} bag. At first, the leaf table contains two possible
				states  (see Section~\ref{Create leaf}). Obviously, these two states are all possible states that can occur for a vertex in a bag.  
				
				According to the definition of the tree decomposition, a bag $X_{i}$ is a separator whenever it  separates the vertices of  $(X^{\triangle}_{i}) \backslash X_{i}$ from  $V\backslash (X^{\triangle}_{i}) $, so the vertices of $(X^{\triangle}_{i}) \backslash X_{i}$ do not appear in other bags except the ones descending from bag $X_i$. Hence, in our bottom-up approach,  all of the possible states have been considered for vertices of bag $X_i$ since they will never be considered again. Using this fact, we continue the proof of correctness of our algorithm by first checking the extension of an \textsc{introduce} bag. 
				
				\begin{lemma}\label{lemma1}
					Let $ Stable_{i-1}$ and $Btable_{i}$ be two tables with all possible states mentioned  in
					Table~\ref{condition vertex} and Table~\ref{bag condition vertex}  for  $X_{i-1}$ and $X_i$ respectively.
					Combining these two tables according to Algorithm~\ref{MixedAlgorithm} produces all of the  possible states for bag $X_i$.
				\end{lemma}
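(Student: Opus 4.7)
The plan is to establish Lemma~\ref{lemma1} by a double-inclusion argument: (completeness) every possible state of $X_i$ with respect to $G_i^\triangle$ arises from combining some row of $Stable_{i-1}$ with some row of $Btable_i$ via Algorithm~\ref{Introducealgorithm}, and (soundness) every row produced by that combination describes a genuine possible state of $X_i$. Since $X_i$ is an introduce bag with unique new vertex $v_{new}$, the properties of a tree decomposition force every edge of $G_i^\triangle$ that is incident to $v_{new}$ to lie inside $X_i$, while $v_{new}$ itself is absent from $G_{i-1}^\triangle$. This structural fact will drive the whole argument.

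For completeness, I would fix an arbitrary mixed dominating configuration $\mathcal{MD}$ for $G_i^\triangle$ and restrict it in two ways: its restriction to $G_{i-1}^\triangle$ induces a specific row of $Stable_{i-1}$ (using the inductive hypothesis that $Stable_{i-1}$ is exhaustive for bag $X_{i-1}$), while its restriction to the vertices and edges visible locally inside $X_i$ induces a specific row of $Btable_i$ (by construction of $Btable_i$ from Table~\ref{bag condition vertex}). The two restrictions agree on the vertices of $X_{i-1}$ and on edges already present in $G_{i-1}$, so the remaining task is to verify that Equation~\ref{seq1} reconstructs exactly the state code that $\mathcal{MD}$ induces on $X_i$ with respect to $G_i^\triangle$.

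For soundness, I would walk through the cases of Tables~\ref{starint} and~\ref{astint} and check that the operators $\star_{Int}$ and $\ast_{Int}$ compute precisely the right power code after the new vertex and its incident edges are introduced. The vertex code records four flags: whether $v\in\mathcal{MD}$, whether some edge of $N^e_{X_i^\triangle}(v)$ is in $\mathcal{MD}$, whether $v$ is dominated, and whether the incident edges of $v$ are dominated. Each of these combines monotonically across $Stable_{i-1}$ and $Btable_i$: the $\in\mathcal{MD}$ flags are OR-combined, domination of $v$ is an OR over sources (from inside $G_{i-1}^\triangle$ or from the newly-introduced edges and $v_{new}$), and similarly for edge domination; the edge operator is verified by the same reasoning on the three edge states of Table~\ref{Possibleconditionforanedge}. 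The cost correction $-|A|-|B|$ in Equation~\ref{seq1} precisely removes the over-counting of vertices and edges that lie in both $X_{i-1}$ and $X_i$, so the $cost$ column agrees with the true cardinality $|\mathcal{MD}\cap(G_i^\triangle)|$.

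The main obstacle is handling those cells of Table~\ref{starint} that contain two possible output values, which correspond to situations where the final power code of a vertex $v\in X_{i-1}$ depends not only on the two parent codes but also on whether an incident edge introduced at this step witnesses domination of $v$ or demands domination of $v$. I would dispatch these ambiguous cells exactly as Algorithm~\ref{Introducealgorithm} does, namely first fix all unambiguous vertex entries and all edge entries via $\ast_{Int}$, and only then revisit the ambiguous vertex cells using the freshly computed codes of the incident edges and neighbours of $v$ in $X_i$. The tie-break reduces in each case to an OR over whether some neighbour or incident edge in $X_i$ can dominate $v$, and a case-by-case verification of the seven power codes completes the proof. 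Combined with the exhaustive listing of $Btable_i$, this shows both inclusions and propagates the invariant that $Stable_i$ is exhaustive one step up the tree.
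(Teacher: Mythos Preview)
Your proposal is correct and considerably more thorough than the paper's own argument, but it takes a genuinely different route. The paper proves Lemma~\ref{lemma1} by a short contradiction sketch: it assumes a possible state $r\in Stable_i$ could be obtained from a pair $(r'_1,r'_2)$ with at least one factor impossible, and argues that an impossible factor encodes a violated MDS restriction that no enlargement can repair, hence $r$ would itself be impossible. That is the entire proof; there is no explicit completeness direction, no case analysis of the multiplication tables, and no discussion of the two-valued cells or the cost correction.

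Your double-inclusion strategy is more constructive: you exhibit, for an arbitrary configuration $\mathcal{MD}$ on $G_i^\triangle$, the specific rows of $Stable_{i-1}$ and $Btable_i$ whose product reproduces its state code (completeness), and conversely verify cell-by-cell that $\star_{Int}$ and $\ast_{Int}$ implement the correct OR-combination of the four status flags (soundness). You also make explicit the key structural fact that every edge of $G_i^\triangle$ incident to $v_{new}$ lies inside $X_i$, which is what justifies treating $Btable_i$ as a purely local table, and you explain the $-|A|-|B|$ correction. The paper's approach buys brevity; yours buys an argument that actually certifies the entries of Tables~\ref{starint} and~\ref{astint} and the tie-breaking logic of Algorithm~\ref{Introducealgorithm}, which the paper takes on faith.
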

				\begin{proof}
					The proof is by contradiction. 
					A new states state $r\in Stable_{i}$ is the result of multiplying two possible states $r_1\in Stable_{i-1}$ and $r_2 \in Btable_i$. We need to show that state $r$ cannot be produced by impossible states.
					Let $r'_1$ be a possible and $r'_2$ be an impossible state, hence $r'_1\in Stable_{i-1}$ and $r'_2 \notin Btable_i$ and $r=r_1 \otimes r_2$.
					Since $r$ is a possible state, then it preserves all of the restrictions of MDS     while state
					$r'_2 \notin Btable_i$ does not satisfy those restrictions and is impossible. So, because we
					cannot satisfy some restrictions in the  entire problem while for a part of it, it is not
					satisfied. Similarly, the proof for cases $r'_1 \notin Stable_{i-1}$ and $r'_2 \in Btable_i$ or $r'_1 \notin Stable_{i-1}$ and $r'_2 \notin Btable_i$ is similar. Combining $Stable_{i-1} $ and $ Btable_{i}$ produces all of the possible states for bag  $X_i$.
				\end{proof}
				In the second step, we show how the extension of a \textsc{forget} bag satisfies restriction mixed domination
				problem. The proof is by contradiction. 
				
				\begin{lemma}\label{lemma2}
					Let $ Stable_{i-1}$ be a table with all possible states for  bag $X_{i-1}$. Deleting invalid rows from this table produces all of the possible states for bag $X_i$.
					
				\end{lemma}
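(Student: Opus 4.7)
The plan is to argue by contradiction in both directions, using the separator property of the tree decomposition as the central fact. By property (3) of the tree decomposition, once $X_i$ forgets the vertex $v_{eliminated}$, neither $v_{eliminated}$ nor any of its incident edges in $G^{\triangle}_{i-1}$ will appear in any bag processed after $X_i$. Hence the domination status of $v_{eliminated}$ and of these edges must already be decided at this step: a partial solution in which $v_{eliminated}$ or one of its incident edges is still undominated can never be extended to a mixed dominating set of $G$. This is exactly the reason that rows with $r_{Stable_{i-1}}(\ell, v_{eliminated}) \in \{5, 7, 8, 9\}$ must be discarded, and that rows whose forgotten vertex is in case $6$ must propagate information to the surviving neighbors.

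For the soundness direction, I would suppose that some row $r$ produced by Algorithm~\ref{forgetal} encodes an impossible state for $X_i$. Then $r$ arose from some row $r' \in Stable_{i-1}$ by zeroing out the $v_{eliminated}$ column together with its incident-edge columns, and by possibly rewriting the state of a neighbor $v_r \in N^{md}_{X_{i-1}}(v_{eliminated})$ from $6$ to $8$, or from $7$ to $9$, whenever the edge $e_s = v_r v_{eliminated}$ carries state $3$. Since the algorithm keeps $r'$ only when $r'(v_{eliminated}) \in \{1,2,3,4,6\}$, the forgotten vertex is guaranteed dominated, and the neighbor promotions faithfully record the fact that an incident edge failed to be dominated before disappearing. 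Together these ensure that $r$ corresponds to a consistent configuration of $G^{\triangle}_i$, contradicting the assumption that $r$ was impossible.

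For completeness, suppose some possible state $r$ for $X_i$ is not produced. By hypothesis $Stable_{i-1}$ contains every possible state of $X_{i-1}$, so in particular it contains the row $r'$ obtained by extending the partial solution underlying $r$ with the true status of $v_{eliminated}$. Because $r$ is possible, the extending assignment dominates $v_{eliminated}$ and all its incident edges in $G^{\triangle}_{i-1}$, so $r'(v_{eliminated}) \notin \{5,7,8,9\}$ and $r'$ survives the filter; moreover the neighbor-update rule rewrites $r'$ precisely to $r$. This contradicts the assumption that $r$ is missing from $Stable_i$.

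The main obstacle I expect is the careful case analysis for state $6$: when $r'(v_{eliminated}) = 6$, the vertex is dominated but at least one incident edge is not, and the responsibility for tracking the uncovered edge transfers to the surviving endpoint since the edge itself vanishes from every future bag. The bulk of the work will be verifying, neighbor by neighbor and edge by edge, that the promotions $6 \mapsto 8$ and $7 \mapsto 9$ in Algorithm~\ref{forgetal} cover exactly the resulting new situations and no others, so that the correspondence between kept pre-filter rows and valid post-filter rows in $Stable_i$ is exact; once that bookkeeping is pinned down, the two contradiction arguments close the proof.
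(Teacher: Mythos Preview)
Your approach is essentially the paper's: both invoke the separator property to conclude that once $v_{eliminated}$ is forgotten neither it nor its already-external incident edges can ever be dominated later, and both do a case split on the state of $v_{eliminated}$ (keep $\{1,2,3,4\}$; discard $\{5,7,8,9\}$; propagate for $6$). Your soundness/completeness framing is a bit more explicit than the paper's direct four-case walk-through, but the content is the same.

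One imprecision to fix in your completeness paragraph: you assert that because $r$ is possible, the extending assignment dominates $v_{eliminated}$ \emph{and all its incident edges in $G^{\triangle}_{i-1}$}. That is too strong at the level of the partial solution recorded by $r'$: an edge $e = v_{eliminated}\,v'$ with $v' \in X_i$ may still be undominated in $G^{\triangle}_{i-1}$ and only later be covered by some edge through $v'$. This is precisely the situation giving $r'(v_{eliminated}) = 6$, and it is why the neighbor-update rule fires at all; as written, your sentence would force $r'(v_{eliminated}) \in \{1,2,3,4\}$ and make the promotion $6 \mapsto 8$, $7 \mapsto 9$ vacuous. Restrict the claim to $v_{eliminated}$ itself and to those incident edges whose other endpoint has already been forgotten (the edges in $G^{\triangle}_{i-1} \setminus G_{i-1}$), and then the conclusion $r'(v_{eliminated}) \notin \{5,7,8,9\}$ follows correctly while still leaving room for state $6$.
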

				\begin{proof}
					Suppose that the vertex $v$ is the \textsc{forget} vertex, thus it does not appear in later steps, all of its edges have appeared up to now and all its possible states are checked before constructing $Stable_{i-1}$. For vertex $v$ and its edges, there are  four cases to consider:
					\begin{itemize}
						\item [1.] {\bf  vertex $v$ and its edges are dominated:} In these cases, the value of vertex $v$ is either 1, 2, 3 or 4. So, they  are valid cases and remain in $Stable_{i}$  
						\item [2.] {\bf  vertex $v$ is not dominated:} For these cases, the value of $v$ is either 5, 7 or 9. So, they are invalid cases and are not allocated to appear in $Stable_{i}$ since no neighbor of vertex $v$ appeared till now.
						\item [3.] {\bf vertex  $v$ is dominated but at least one of its edges in $G_{i-1}^{\triangle} \backslash G_{i-1}$ is not dominated:} Let $e$ be an edge in $G_{i-1}^{\triangle} \backslash G_{i-1}$ which is not dominated. Edge $e$ causes the value of vertex $v$ to be 8 and makes invalid cases. So, these cases must be omitted from $Stable_{i}$ since edge $e$ has not any neighbor which appeared till now.
						\item [4.] {\bf  vertex $v$ and its edges in $G_{i-1}^{\triangle} \backslash G_{i-1}$ are dominated
							but at least one of its edges in $ G_{i-1}$  is not dominated:} Let $e=(v,v')$ be an edge in $
						G_{i-1}$ that is not dominated. So, edge $e$ causes the value of $v$ to be 6. By deleting vertex $v$,
						edge $e$ is converted to an external edge for bag $i$ and does not appear in $ G_{i}$. However  it is
						possible that an edge appears in later steps and dominates $e$, so this states can be extended to a
						solution and we need to save it. Thus, we consider the effect of edge $e$ on vertex $v'$ and change the value of vertex $v'$. 
					\end{itemize}
					
				\end{proof}
				
				In the third step, we are going to check the correctness of our algorithm in the  extension of a \textsc{join} bag.  
				
				\begin{lemma}\label{lemma3}
					Let $ Stable_{i1}$ and $Stable_{i2}$ be tables with all possible states  mentioned in
					Table~\ref{condition vertex} and Table~\ref{bag condition vertex}  for children of bag $X_i$. Combining these  two tables  produces all of the possible states for bag $X_i$.
				\end{lemma}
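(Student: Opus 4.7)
The plan is to mirror the contradiction-style argument used in Lemma~\ref{lemma1}, but now carefully account for the fact that a \textsc{join} bag $X_i$ shares its entire vertex set with both children $X_{i1}$ and $X_{i2}$, so a single vertex $v\in X_i$ may be partially covered from one side of the tree and partially from the other. I will split the proof into a soundness direction (every row produced by the multiplication is a legal state of $X_i$) and a completeness direction (every legal state of $X_i$ is produced by at least one pair of parent rows).

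For soundness, I would argue by contradiction as in Lemma~\ref{lemma1}: suppose a row $r=r_1\otimes r_2$ appears in $Stable_i$ but is not realizable for $X_i$. Because the subgraphs $G^{\triangle}_{i1}\setminus X_i$ and $G^{\triangle}_{i2}\setminus X_i$ are separated by $X_i$ (tree-decomposition property~3), every incident edge of a vertex $v\in X_i$ outside $X_i$ belongs to exactly one side, so the membership, domination, and coverage status of $v$ in the join is determined by combining the corresponding attributes from $r_1$ and $r_2$ coordinatewise. The tables $\star_{Join}$ and $\ast_{Join}$ are precisely designed to perform this combination (e.g.\ a vertex with state~5 on one side and any state $\ge 4$ on the other that provides a dominating neighbor becomes state~4, and the $|A|,|B|$ correction avoids double-counting shared MDS members and edge-dominated vertices). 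Hence if $r$ is impossible, at least one of $r_1,r_2$ must already have been impossible, contradicting the hypothesis that $Stable_{i1}$ and $Stable_{i2}$ contain only legal states.

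For completeness, I would take an arbitrary legal state $r$ of $X_i$, corresponding to a partial mixed dominating set $\mathcal{MD}\subseteq V(G^{\triangle}_{i})\cup E(G^{\triangle}_i)$, and decompose it into $\mathcal{MD}_1=\mathcal{MD}\cap(V(G^{\triangle}_{i1})\cup E(G^{\triangle}_{i1}))$ and analogously $\mathcal{MD}_2$. By the separator property these sets agree on $X_i$ and partition all remaining elements, so they induce rows $r_1\in Stable_{i1}$ and $r_2\in Stable_{i2}$ by inductive hypothesis. I would then verify, case by case over the nine power values of Table~\ref{condition vertex}, that $r_1\star_{Join}r_2$ and $r_1\ast_{Join}r_2$ recover exactly $r$; the delicate entries are those where the combined power cannot be read off uniquely (the two-valued cells in Table~\ref{Joinstar}), for which Algorithm~\ref{joinalgorithm} resolves the ambiguity by inspecting the incident edge values already filled in. The cost column matches because the subtraction of $|A|$ and $|B|$ cancels precisely the elements of $\mathcal{MD}$ counted in both $r_1$ and $r_2$.

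The main obstacle I anticipate is handling the ambiguous cells in $\star_{Join}$: for instance, when both children assign power~$6$ or~$7$ to the same vertex $v$, the join may legally be either~$4$ or~$6$ depending on whether the incident edges of $v$ inside $X_i$ are all covered. I would therefore devote the bulk of the argument to verifying that the disambiguation rule used in Algorithm~\ref{joinalgorithm} (choosing the first versus second entry based on the already-computed $r_{HS}(1,s)$ of incident edges) matches the only value consistent with a globally legal extension, so that every legal $r$ is actually produced and no spurious $r$ slips through.
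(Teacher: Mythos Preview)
Your soundness direction---arguing by contradiction that $r=r_1\otimes r_2$ cannot arise from an impossible $r_1$ or $r_2$---is exactly the paper's proof: the paper simply says ``by contradiction, similar to Lemma~\ref{lemma1}, with $r'_1\in Stable_{i1}$, $r'_2\notin Stable_{i2}$ (and the symmetric cases)'' and stops there. Your completeness direction, the decomposition of $\mathcal{MD}$ along the separator, and the careful treatment of the two-valued cells in $\star_{Join}$ go well beyond what the paper actually writes; these additions are correct and make the argument genuinely rigorous, whereas the paper's one-line proof leaves both completeness and the disambiguation step implicit.
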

				\begin{proof}
					The proof is  by contradiction and is similar to Lemma~\ref{lemma1}, except that   $r'_1\in Stable_{i1}$ and $r'_2 \notin Stable_{i2}$ and for other cases similarly $r'_1\notin Stable_{i1}$ and $r'_2 \in Stable_{i2}$ and $r'_1\notin Stable_{i1}$ and $r'_2 \notin Stable_{i2}$.
				\end{proof}
				
				\begin{theorem} \label{theorem2} Our proposed mixed domination algorithm \label{Mixed} produces all possible states  satisfying MDS.
				\end{theorem}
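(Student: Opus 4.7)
The plan is to prove Theorem~\ref{theorem2} by structural induction on the very nice tree decomposition $T$, traversed in postorder $\tau$. The inductive invariant I would maintain is that, for every bag $X_i$ visited, the table $Stable_i$ contains exactly the set of partial states consistent with the restricted MDS problem on the induced subgraph $G_i^\triangle$, where the nine states of Table~\ref{condition vertex} are used to record, for each vertex in $X_i$, which covering requirements still need to be satisfied by vertices or edges appearing later in the traversal. The correctness of minimality of the cost column is then a direct consequence: whenever two distinct partial computations produce the same row, the bookkeeping in Equation~\ref{ell} stores only the smaller cost.

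For the base case, I would appeal to Section~\ref{Create leaf}: a leaf bag contains a single vertex $v$, and the two rows (power $2$ with cost $1$, and power $5$ with cost $0$) exhaust the possibilities, since $v$ has no incident edges inside $G_i^\triangle = \{v\}$. For the inductive step, I would split into the four bag types dictated by the very nice tree decomposition. Leaf bags are handled by the base case. Introduce bags follow from Lemma~\ref{lemma1}, which shows that $Stable_{i-1}\otimes Btable_i$ produces exactly the valid extensions when a new vertex and its incident edges inside $X_i$ are added. Forget bags follow from Lemma~\ref{lemma2}: the four-case analysis there precisely characterizes which rows survive when a vertex leaves the bag, and the update rule $6 \mapsto 8$ or $6 \mapsto 9$ correctly propagates the obligation to cover the dangling edge onto its neighbor. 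Join bags follow from Lemma~\ref{lemma3}, where $\star_{Join}$ and $\ast_{Join}$ (Tables~\ref{Joinstar} and~\ref{Joinast}) merge two independent partial solutions over the same vertex set without double-counting the shared elements in $X_i$, as enforced by the $-|A|-|B|$ correction in Equation~\ref{seq2}.

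Once the invariant holds at the root $r$ of $T$, we have $G_r^\triangle = G$ and $X_r = \emptyset$ (for a standard nice tree decomposition) or a single vertex, so the rows of $Stable_r$ are in bijection with feasible mixed dominating sets of $G$. The minimum of the cost column is therefore $\gamma_m(G)$, and the modification described at the end of Section~\ref{Sec4} (via the encoding functions $\phi$ and $\psi$) allows one to reconstruct the certifying set $S \subseteq V \cup E$ by back-tracing through $\tau$. This establishes that the algorithm produces all states satisfying the MDS specification given in Equation~\ref{mso}.

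The main obstacle I anticipate is the subtle interaction between the two-valued entries in Tables~\ref{starint} and~\ref{Joinstar} and the edge-power columns: the resolved vertex power depends on the values assigned to incident edges, and one must argue that Algorithms~\ref{Introducealgorithm} and~\ref{joinalgorithm} resolve these two-valued cells in a consistent order (first fixing all single-valued vertex powers, then the edge powers, then the two-valued ones) so that no valid partial state is missed and no impossible state is introduced. Lemmas~\ref{lemma1}--\ref{lemma3} address this resolution implicitly, so the remaining work in the proof is just to assemble them into the induction outlined above and verify the root-level bijection; I expect no further nontrivial calculation beyond quoting the three lemmas.
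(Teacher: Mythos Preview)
Your proposal is correct and follows essentially the same approach as the paper: a structural induction on the bags of the very nice tree decomposition, with the leaf bag as base case and Lemmas~\ref{lemma1}, \ref{lemma2}, and \ref{lemma3} supplying the inductive step for \textsc{introduce}, \textsc{forget}, and \textsc{join} bags respectively. Note only that the cost-minimality discussion and the root-level extraction you include actually belong to Lemma~\ref{lemma5} and the subsequent theorem rather than to Theorem~\ref{theorem2} itself, whose scope in the paper is strictly the completeness of the state set.
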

				\begin{proof}
					A partial solution starts by considering a \textsc{leaf} bag and initializing the leaf table to contain all
					two possible states. By Lemmas~\ref{lemma1},~\ref{lemma2} and~\ref{lemma3},  
					we proved that every extension of partial solutions contains all of the possible states that satisfy 
					the conditions of a mixed domination set,  Thus, our algorithm produces all possible states required.  
				\end{proof}
				\begin{lemma} \label{lemma5}
					Every extension of a partial solution preserves possible states with minimum cost.
				\end{lemma}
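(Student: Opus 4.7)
The plan is to argue by induction along the postorder traversal $\tau$, using Theorem~\ref{theorem2} as a completeness backbone. Theorem~\ref{theorem2} already guarantees that every state that \emph{can} arise at a bag appears as a row of the corresponding $Stable$ table, so the remaining task of Lemma~\ref{lemma5} is only to argue that the cost column stored for each such row is genuinely the minimum over all partial mixed dominations that realize that state. The induction hypothesis I would carry at bag $X_i$ reads: for every state $r$ present in $Stable_i$, the value in its cost column equals
\[
\min\bigl\{\,|\mathcal{MD}|\;:\;\mathcal{MD}\subseteq V(G_i^{\triangle})\cup E(G_i^{\triangle}),\ \mathcal{MD}\text{ realizes }r\bigr\}.
\]

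For the base case I would use the construction of Section~\ref{Create leaf}: each of the two rows of a \textsc{leaf} bag corresponds to a unique partial mixed domination of cost $1$ or $0$, which is trivially the minimum. A \textsc{forget} bag is almost as easy: Algorithm~\ref{forgetal} never alters a cost entry, it only discards infeasible rows and re-indexes the survivors via Equation~\ref{ell}; when two distinct rows of $Stable_{i-1}$ happen to collapse onto a single row of $Stable_{i}$ (e.g.~because the forgotten vertex admitted two different legal powers producing the same residual state on the rest of the bag), the help-table machinery described just after Algorithm~\ref{MixedAlgorithm} performs a running min-update and therefore keeps the smaller cost, so the inductive hypothesis transfers verbatim.

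The substantive content is in the \textsc{introduce} and \textsc{join} bags, both of which arise by multiplying two tables. For an \textsc{introduce} bag I would argue that every partial mixed domination of $G_i^{\triangle}$ decomposes uniquely into (i)~a partial mixed domination of $G_{i-1}^{\triangle}$, which projects to a row $r_1 \in Stable_{i-1}$, and (ii)~a local configuration on the newly introduced vertex together with its incident edges inside $X_i$, which projects to a row $r_2 \in Btable_i$. Conversely, any state-compatible pair $(r_1,r_2)$ lifts to a partial mixed domination of $G_i^{\triangle}$ whose total size is
\[
\mathrm{cost}(r_1)+\mathrm{cost}(r_2)-|A|-|B|,
\]
which is exactly the third case of Equation~\ref{seq1}. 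Because the nested \textbf{for} loops of Algorithm~\ref{MixedAlgorithm} enumerate every such pair and the help-table takes a running minimum, and because $\mathrm{cost}(r_1)$ is minimal by the inductive hypothesis applied to $Stable_{i-1}$, the cost column of each new row of $Stable_i$ ends up equal to the desired global minimum. The \textsc{join} bag is handled identically with Equation~\ref{seq2} in place of Equation~\ref{seq1}, now treating all of $X_i = X_{i1} = X_{i2}$ as the shared set.

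The main obstacle I anticipate is verifying that $|A|$ and $|B|$ count \emph{precisely} the over-counted elements and nothing else. On the vertex side this amounts to checking that the condition $r_1[\alpha]\le 2 \wedge r_2[\alpha]\le 2$ is equivalent to the shared bag vertex $v_\alpha$ lying in $\mathcal{MD}$ in both component solutions, and on the edge side that $r_1[\alpha]=1 \wedge r_2[\alpha]=1$ is equivalent to the shared edge $e_\alpha$ lying in $\mathcal{MD}$ in both; one also needs to rule out that an element selected on only one side, or an element that is merely covered rather than selected, contributes to the correction. Once this encoding-versus-semantics bookkeeping is cross-checked against Tables~\ref{starint},~\ref{astint},~\ref{Joinstar} and~\ref{Joinast}, the induction closes and Lemma~\ref{lemma5} follows.
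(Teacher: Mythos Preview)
Your proposal is correct and considerably more thorough than the paper's own argument. The paper dispatches Lemma~\ref{lemma5} in a single sentence: it merely observes that Algorithm~\ref{MixedAlgorithm} stores rows of $Stable_i$ via the coding of Equation~\ref{ell} and that the help-table mechanism performs a running minimum update on collisions, concluding directly that minimum cost is preserved. It neither carries an explicit induction over the bag types nor verifies that the cost arithmetic in Equations~\ref{seq1} and~\ref{seq2} actually equals the cardinality of the combined partial mixed domination. Your induction along $\tau$, and especially your identification of the $|A|$ and $|B|$ correction terms as the place where the semantic content lives, supplies precisely what the paper leaves implicit. Both arguments ultimately rest on the same mechanism (enumerate all compatible pairs, take a running minimum at each coded position), but yours adds the decomposition-and-lifting justification that the stored number really is the minimum over all realizing partial solutions rather than merely the minimum over the numbers the algorithm happened to compute.
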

				
				\begin{proof}
					The Algorithm~\ref{MixedAlgorithm}  stores  rows of $Stable_{i}$ using a coding scheme and updates the state with minimum cost. So, every extension of a partial solution preserves possible states with minimum cost.   
				\end{proof}
				
				\begin{theorem} 
					Our proposed dynamic algorithm computes $\gamma_{md}$ of graphs with constant tree-width in the
					status table of root bag, e.g. $Stable_{root}$ in  $O(9^{\tw} \times  3^{\tw^{2}} \times  \tw^{2} \times   n)$ time. 
				\end{theorem}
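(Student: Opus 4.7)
The plan is to assemble this theorem by combining two independent ingredients that have essentially already been established earlier in the paper: a correctness argument and a running-time bookkeeping argument. Correctness of the computed value follows from Theorem~\ref{theorem2} together with Lemma~\ref{lemma5}, while the running-time bound is a packaging of Theorem~\ref{timecom} with Lemma~\ref{lemordertree-width}. So my proof will be a short argument that ties these pieces to the specific output location (the root status table) and to the claimed bound.

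For correctness, I would first note that by Lemma~\ref{lemordertree-width} the algorithm works on a very nice tree decomposition of $G$ of width $\tw$ with $O(n)$ bags, and that $X_{\mathrm{root}}^{\triangle}=V$, so $G_{\mathrm{root}}^{\triangle}=G$. By Theorem~\ref{theorem2} the algorithm produces all partial states that are consistent with being extendible to a mixed dominating set of $G$, and by Lemma~\ref{lemma5} for each such state the algorithm maintains the minimum possible cost. At the root, the partial constraints coincide with the global constraints: a row of $Stable_{\mathrm{root}}$ corresponds to a valid solution on $G$ precisely when every vertex entry lies in $\{1,2,3,4\}$ and every edge entry lies in $\{1,2\}$ (i.e.\ no uncovered vertex or edge remains). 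Taking the minimum cost over such rows yields $\gamma_{md}(G)$. The only subtlety here is to verify that no valid global MDS is discarded by the row-merging coding used in Algorithm~\ref{MixedAlgorithm}; this is immediate from Lemma~\ref{lemma5}, since merging only keeps the minimum cost among rows with identical state vectors.

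For the running time, I would invoke Theorem~\ref{timecom} directly: each bag is processed in $O(9^{\tw}\times 3^{\tw^{2}}\times \tw^{2})$ time, because in the worst case (a \textsc{join} bag) the number of rows in the incoming status tables is bounded by $9^{\tw+1}\times 3^{\binom{\tw+1}{2}}$, one row per assignment of nine vertex states to at most $\tw+1$ vertices and three edge states to at most $\binom{\tw+1}{2}$ edges. Processing the pair of children tables, and computing for each resulting row the $\ell$-index of Equation~\ref{ell} and updating entries through the operators $\star_{Int},\ast_{Int},\star_{Join},\ast_{Join}$, costs $O(\tw^{2})$ per row. Forget and leaf bags are cheaper. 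Multiplying by the $O(n)$ bags from Lemma~\ref{lemordertree-width} yields the claimed $O(9^{\tw}\times 3^{\tw^{2}}\times \tw^{2}\times n)$ bound.

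The step I expect to be the main obstacle, if one were to write this out fully, is the first: formally identifying which rows of $Stable_{\mathrm{root}}$ encode genuine mixed dominating sets of $G$ and arguing that every minimum mixed dominating set corresponds to at least one such row. This requires a careful reading of Lemmas~\ref{lemma1}--\ref{lemma3} in the special case where $X_i$ is the root, combined with the fact that after the last \textsc{forget} bag no further edges or vertices can be added to cure an uncovered element. Once this is cleanly stated, extracting the minimum from the remaining rows and the time complexity count are routine.
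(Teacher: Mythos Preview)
Your proposal is correct and follows essentially the same approach as the paper: the paper's proof simply invokes Theorem~\ref{theorem2} and Lemma~\ref{lemma5} for correctness and Theorem~\ref{timecom} for the running time, then concludes. Your version is in fact more careful than the paper's own argument, since you explicitly identify which rows of $Stable_{\mathrm{root}}$ encode genuine mixed dominating sets (vertex entries in $\{1,2,3,4\}$, edge entries in $\{1,2\}$), whereas the paper leaves this implicit and just takes the minimum cost over all rows of the root table.
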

				\begin{proof}
					According to the Theorem~\ref{theorem2} and Lemma~\ref{lemma5}, our algorithm preserves possible states with
					minimum cost in every bag, so $Stable_{root}$ contains all of the possible states with minimum cost. To find
					$\gamma_{md}$, it is enough for the algorithm to find the  smallest cost among all of the possible states in
					the root bag.  Also Theorem~\ref{timecom} expresses the time complexity of this algorithm. So, the desired result is obtained.
				\end{proof}
				
				\section{Conclusion} \label{Sec7}
				
				In this paper, we proposed an algorithm to solve MDS and AMDS of graphs with
				bounded tree-width.  Our algorithm used a novel technique of inputting edge
				domination power to vertices of the graph. Using this technique and by
				analyzing our algorithm, we have shown for the first time that MDS is
				fixed-parameter tractable.  We provided detailed dynamic program is given with 
				running time $O^*(3^{\tw^2})$ and a theoretical improved version with running time $O^*(6^\tw)$ . 
				
				As a future work, we suggest enhancing the running
				time of our algorithm for special classes of graphs. Studying other parameters
				for MDS such as path-width seems a fruitful topic, too. More importantly,
				exploiting our technique of assigning power values to vertices from edges to solve
				other graph problems is another research direction.
				
	\acknowledgements
	\label{sec:ack}
		This article has been written while the second author was on a sabbatical visit to the University of Auckland.
			He would like to express his gratitude to Prof.~Cristian S.~Calude and his research group for the nice and friendly hospitality.
			
			The revision of this article has been done when the first author was on research visit to Eotvos Lorand
			University, ELTE. He would like to express his thankfulness, warmth and appreciation to
			Prof.~Komjath P.~who made his research successful. He would also like to extend his thanks to the computer
			science group of the university of ELTE which assisted him at every point to cherish his goal.


\label{sec:biblio}

\end{document}